\newtheorem{theorem}{Theorem}
\newtheorem{lemma}[theorem]{Lemma}
\newtheorem{corollary}[theorem]{Corollary}
\newtheorem{lemmaPrime}{Lemma}
\theoremstyle{definition}
\newtheorem{definition}[theorem]{Definition}
\def\A{{\cal A}}
\newcommand{\ket}[1]{\mathinner{\left \lvert #1 \right\rangle}}
\newcommand{\braket}[2]{\mathinner{ \left\langle #1 \right. \left \lvert #2 \right\rangle}}
\newcommand\lr[1]{\left( #1 \right)}
\newcommand\lra[1]{\left\langle  #1 \right\rangle}
\newcommand\lrv[1]{\left|  #1 \right|}
\newcommand\lrk[1]{\left[  #1 \right]}
\newcommand\lrb[1]{\left\lbrace #1 \right\rbrace}
\newcommand\nrm[1]{\left\Vert #1 \right\Vert}
\newcommand{\tdots}{\,..\,}
\newcommand{\eps}{\epsilon}
\DeclareMathOperator{\bmdeg}{bmdeg}
\newcommand{\bmdega}{\widetilde \bmdeg}
\newcommand{\dega}{\widetilde \deg}
\DeclareMathOperator*{\Q}{Q}
\newcommand{\BCF}{\lrb{-1,1}}
\newcommand{\mbb}[1]{\mathbb{#1}}
\newcommand{\mc}[1]{\mathcal{#1}}
\def\lbra{\langle}
\newcommand{\bra}[1]{\langle #1|}
\newcommand{\nonzVec}[1]{\mbb R_{+}^{#1}}
\newcommand{\obtain}{\longrightarrow }
\newcommand{\lmax}{\lambda_{\max}}
\begin{document}

\title{Polynomials, Quantum Query Complexity, and Grothendieck's Inequality}
\date{}
\author{Scott Aaronson$^1$
	\and Andris Ambainis$^2$
	\and J\=anis Iraids$^2$
	\and Martins Kokainis$^2$
	\and Juris Smotrovs$^2$
}
\maketitle

\begin{abstract}
We show an equivalence between 1-query quantum algorithms and representations by degree-2 polynomials.
Namely, a partial Boolean function $f$ is computable by a 1-query quantum algorithm with error bounded by $\epsilon<1/2$
iff $f$ can be approximated by a degree-2 polynomial with error bounded by $\epsilon'<1/2$.
This result holds for two different notions of approximation by a polynomial: the standard definition of Nisan and Szegedy \cite{NS} and
the approximation by block-multilinear polynomials recently introduced by Aaronson and Ambainis \cite{AA}.

We also show two results for polynomials of higher degree. First, there is a total Boolean function which requires $\tilde{\Omega}(n)$ quantum queries
but can be represented by a block-multilinear polynomial of degree $\tilde{O}(\sqrt{n})$. Thus, in the general case 
(for an arbitrary number of queries), block-multilinear polynomials are not equivalent to quantum algorithms.

Second, for any constant degree $k$, the two notions of approximation by a polynomial (the standard and the block-multilinear) are equivalent. As a consequence, we solve an open problem from \cite{AA}, showing that
one can estimate the value of any bounded degree-$k$ polynomial $p:\{0, 1\}^n \rightarrow [-1, 1]$ with $O(n^{1-\frac{1}{2k}})$ queries.
\end{abstract}

\footnotetext[1]{Computer Science and Artificial Intelligence Laboratory, MIT. Supported by an Alan T. Waterman Award from the National Science Foundation, under grant no. 1249349. E-mail: {\tt aaronson@csail.mit.edu}.}
\footnotetext[2]{Faculty of Computing, University of Latvia.
	Supported by the European Commission FET-Proactive project QALGO, ERC Advanced Grant MQC and Latvian State Research programme NexIT project No.1. Emails:{\tt ambainis@lu.lv, janis.iraids@gmail.com,juris.smotrovs@sets.lv,
martins.kokainis@lu.lv}.}

%\listoftodos

%\tableofcontents
\setcounter{page}{0}
\thispagestyle{empty}
\newpage

\section{Introduction}

%\todo[inline]{Conclusions}

Many of the known quantum algorithms can be studied in the query model where one measures the complexity of an algorithm by the number of queries to the input that it makes. In particular, this model encompasses Grover's search \cite{Grover}, the quantum part of Shor's factoring algorithm (period-finding) \cite{Shor}, their generalizations and many of the more recent quantum algorithms such as element distinctness \cite{A04} and NAND tree evaluation \cite{FGG,AC+,Reichardt}.

For proving lower bounds on quantum query algorithms, one often uses a connection to polynomials \cite{Beals}. \ After $k$ queries to an input $x_1, \ldots, x_N$,  the amplitudes of the algorithm's quantum state are polynomials of degree at most $k$ in $x_1, \ldots, x_N$. \ Therefore, one can prove that there is no quantum algorithm using fewer than $k$ queries by showing the non-existence of a polynomial with certain properties.

For example, one can use this approach to show that any quantum algorithm for Grover's search algorithm requires 
$\Omega(\sqrt{N})$ queries \cite{Beals} or to show an optimal quantum lower bound for finding collisions \cite{AS}.
In some cases, the lower bounds obtained by polynomials method are tight, either exactly (for example, for computing the parity
of $N$ input bits $x_1, \ldots, x_N$ \cite{Beals}) or up to a constant factor (Grover's search and many other examples).
In other cases, the number of queries to compute a function $f(x_1, \ldots, x_N)$ is asymptotically larger than the lower bound
which follows from polynomials \cite{A03,ABK}.

In this paper, we discover the first case where we can go in the opposite direction: from a polynomial to a bounded-error quantum algorithm\footnote{In unbounded-error settings, equivalences between quantum algorithms and polynomials were previously shown by de Wolf \cite{deWolf} and by Montanaro et al.\ \cite{MNR}.}. 
That is, polynomials with certain properties and quantum algorithms are equivalent! 

In more detail, we consider computing partial Boolean functions $f(x_1, \ldots, x_n)$ and show that 
the existence of a quantum algorithm that computes $f$ with 1 query is equivalent to the existence of a degree 2 polynomial that approximates
$f$. \ This result holds for two different notions of approximation by a polynomial: the standard one in \cite{NS} and
the approximation by block-multilinear polynomials introduced in \cite{AA}.

The methods that we use are quite interesting. To transform a polynomial into a quantum algorithm, we first transform it into the block-multilinear form of \cite{AA} and then use a variant of Grothendieck's inequality for relating two matrix norms \cite{Pisier}. One of the two norms corresponds to the constraints on the block-multilinear polynomials while the other norm corresponds to algorithm's transformations being unitary. While Grothendieck's inequality has been used in the context of quantum non-locality (e.g. in \cite{AGT}), this appears to be its first use in the context of quantum algorithms.

We then show two results for polynomials of larger degree:
\begin{itemize}
\item
similarly to general polynomials, block-multilinear polynomials are not equivalent to quantum algorithms in the general case: one of {\em cheat-sheet} functions of \cite{ABK} requires $\tilde{\Omega}(n)$ quantum queries but can be described by a block-multilinear polynomial of degree $\tilde{O}(\sqrt{n})$;
\item
for representations by polynomials of degree $d=O(1)$, a partial function $f$ can be represented by a general polynomial of degree $d$ if and only if it can be represented by a block-multilinear polynomial of degree $d$.
\end{itemize}
We note that the first result does not exclude an equivalence between quantum algorithms and polynomials for a small number of queries that is larger than 1. \ For example, 2-query quantum algorithms could be equivalent to polynomials of degree 4. The second result shows that, to prove such an equivalence, it suffices to give a transformation from block-multilinear polynomials to quantum algorithms. 

Another consequence of the second result is that, if we have a general polynomial $f(x_1, \ldots, x_n)$ which is bounded (i.e., $|f|\leq 1$ for all $x_1, \ldots, x_n\in\{0, 1\}$), the value of this polynomial can be estimated with $O(n^{1-1/2d})$ queries about values of $x_1, \ldots, x_n$. 
This resolves an open problem from \cite{AA} and is shown by transforming $f$ into a 
block-multilinear form and then using the sampling algorithm of \cite{AA} for block-multilinear polynomials.
\section{Preliminaries}
\subsection{Notation}
By $  [a\tdots b] $, with $ a,b $ being integers, $ a\leq b  $, we denote the set $ \lrb{a,a+1,a+2, \ldots, b} $. When $ a=1 $, notation $ [a \tdots b] $ is simplified to $ [b] $.

For a vector  $ x $, let    $ \nrm x_p $ stand for the $ p $-norm; when $ p=2 $, this is the Euclidean norm and the notation is simplified to $ \nrm x  $. For a matrix $ A $, by  $ \nrm{A}_{p \to q} $ we denote
\[
\nrm{A}_{p \to q} = \sup_{x : \nrm x_p  \neq 0}  \frac{\nrm {Ax}_q}{\nrm x_p }= \max_{x : \nrm x_p  =1}   \nrm {Ax}_q= \max_{x : \nrm x_p  \leq 1}   \nrm {Ax}_q.
\]
By $ \nrm A  $ we understand the usual operator norm $ \nrm{A}_{2 \to 2} $.

$ D_x $ stands for the diagonal matrix with components of $ x $ on its diagonal.

By $ K $ we denote the (real) Grothendieck's constant which is defined as the smallest number with the following 
property: if $A=(a_{ij})$ is such that $\sum_{i, j} a_{ij} x_i y_j\leq 1$ for any choice of $x_i, y_j\in\{-1, 1\}$,
then $\sum_{i, j} a_{ij} (u_i, v_j)\leq K$ for any choice of vectors (with real components) $u_i, v_j$ with $\|u_i\|=1$ and $\|v_j\|=1$ for all $i, j$. It is known \cite{Pisier,Braverman} that
\[ \frac{\pi}{2} \leq K < \frac{\pi}{2\ln(1 + \sqrt 2)} .\]

\subsection{Quantum query complexity and polynomial degree}

We consider computing partial Boolean functions $f(x_1, \ldots, x_n):X \rightarrow \{0, 1\}$ (for some $X\subseteq \{0, 1\}^n$) in the standard quantum query model. For technical convenience, we relabel the values of input variables $x_i$ from $\{0, 1\}$ to $\{-1, 1\}$. Then, a partial Boolean function $f$ maps a set $X\subseteq \{-1, 1\}^n$ to $\{0, 1\}$.

Let $\Q_{\epsilon}(f)$ be the minimum number of queries in a quantum algorithm computing
$f$ correctly with probability at least $1-\epsilon$, for every $x=(x_1, \ldots, x_n)$ for which $f(x)$ is defined.

\begin{definition}
\label{def:deg}
$\dega_{\epsilon}(f)$ is the minimum degree of a polynomial $p$ (in variables $x_1, \ldots, x_n$) such that
\begin{enumerate}
\item
$\lrv{p (x)  -f(x) } \leq  \epsilon $ for all $x\in\{-1, 1\}^n$ for which $f(x)$ is defined;
\item
$p(x)\in[0, 1]$ for all $x\in\{-1, 1\}^n$.
\end{enumerate}
$\deg(f)$ denotes $\dega_0(f)$.
\end{definition}

It is well known that $\Q_{\epsilon}(f)\geq \frac{1}{2} \dega_{\epsilon}(f)$ \cite{Beals}. We now consider a refinement of this result due to \cite{AA}.
We say that a polynomial $p$ of degree $k$ is \textit{block-multilinear} if its variables $x_{1} ,\ldots,x_{N}$ can be partitioned into $k$ blocks, $R_{1},\ldots,R_{k}$, so  that every monomial of $p$ contains exactly one variable from each block\footnote{In other words, a block-multilinear polynomial is just a multilinear form.
	We, however, use the word {\em block-multilinear}, to emphasize the difference from standard polynomial representations of Boolean functions which are multilinear
	but are not multilinear forms.}.   

\begin{lemma}[{\cite[Lemma 20]{AA}}] \label{blockbeals}
	Let $\mathcal{A}$ be a quantum algorithm that makes $t$ queries to a Boolean input $x\in\BCF  ^{n}$.  Then there	exists a degree-$2t$ block-multilinear polynomial $p:\mathbb{R}^{2t (n+1)} \rightarrow\mathbb{R}$, with $2t$ blocks of $n+1$ variables each, such that
	\begin{enumerate}
		\item[(i)] the probability that $\mathcal{A}$ outputs 1 for an input $x=(x_1,\ldots,x_n) \in \BCF^n$  equals $p(\tilde x, \ldots, \tilde x) $, where $ \tilde x:= (1,x_1,\ldots,x_n) $   (with $\tilde x$ repeated $2t$  times), and	
		\item[(ii)] $p (z) \in\left[  -1,1\right]  $  for all $z\in\BCF^{2t(n+1)}$.
	\end{enumerate}
\end{lemma}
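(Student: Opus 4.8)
The plan is to carry out a block-multilinear refinement of the Beals et al.\ \cite{Beals} observation that the amplitudes of a $t$-query algorithm are degree-$t$ polynomials in the input. First I would normalize $\mathcal{A}$: write it as $U_t\,Q_x\,U_{t-1}\,Q_x\cdots U_1\,Q_x\,U_0$ applied to a fixed start state $\ket{0}$, where each $U_j$ is input-independent and $Q_x$ is the phase oracle acting on an $(n{+}1)$-dimensional query register by $Q_x\ket{i} = x_i\ket{i}$ with the convention $x_0 := 1$ (so ``index $0$'' is a no-op query), and where ``$\mathcal{A}$ outputs $1$'' means the final computational-basis measurement lands in a fixed subspace, with projector $\Pi$.

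Next, for the $j$-th query I would introduce a fresh block of $n+1$ formal variables $y^{(j)} = (y_0^{(j)},\dots,y_n^{(j)})$ and replace the $j$-th copy of $Q_x$ by the diagonal operator $\tilde Q^{(j)}$ that multiplies the $\ket{i}$-component (tensored with any workspace) by $y_i^{(j)}$, and set $\ket{\psi^{(j)}} := U_j\,\tilde Q^{(j)}\,U_{j-1}\cdots \tilde Q^{(1)}\,U_0\ket{0}$. An easy induction on $j$ (base case: $U_0\ket{0}$ has constant amplitudes) shows that every amplitude of $\ket{\psi^{(j)}}$ is a homogeneous block-multilinear form of degree $j$ in the blocks $y^{(1)},\dots,y^{(j)}$: applying $\tilde Q^{(j)}$ multiplies each amplitude by one variable from block $j$, and applying $U_j$ takes linear combinations, which preserves homogeneous block-multilinearity. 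So the final amplitudes are degree-$t$ forms $q_{i,w}(y^{(1)},\dots,y^{(t)})$; moreover, substituting $y^{(j)} = \tilde x$ in every block turns each $\tilde Q^{(j)}$ back into $Q_x$, so $q_{i,w}(\tilde x,\dots,\tilde x)$ is precisely the amplitude of $\ket{i}\ket{w}$ in the genuine run of $\mathcal{A}$ on $x$.

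To double the degree, write $q_{i,w} = a_{i,w} + \sqrt{-1}\,b_{i,w}$ with $a_{i,w}, b_{i,w}$ real block-multilinear forms in blocks $1,\dots,t$ (taking real/imaginary parts of a block-multilinear form leaves a block-multilinear form of the same degree), let $a'_{i,w}, b'_{i,w}$ denote the same forms with variables relabeled into blocks $t{+}1,\dots,2t$, and put
\[
p(y^{(1)},\dots,y^{(2t)}) := \sum_{\,(i,w)\ \mathrm{accepting}} \bigl( a_{i,w}\,a'_{i,w} + b_{i,w}\,b'_{i,w} \bigr).
\]
This is a real polynomial in $2t(n{+}1)$ variables, every monomial uses exactly one variable from each of the $2t$ blocks, and on the diagonal $y^{(j)} = \tilde x$ it equals $\sum_{\mathrm{accepting}} \bigl(a_{i,w}(\tilde x)^2 + b_{i,w}(\tilde x)^2\bigr) = \sum_{\mathrm{accepting}} |q_{i,w}(\tilde x,\dots,\tilde x)|^2 = \bra{\psi_{\mathrm{final}}}\Pi\ket{\psi_{\mathrm{final}}}$, the probability that $\mathcal{A}$ outputs $1$ on $x$; this gives (i).

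For (ii) — the part I expect to need the most care — observe that if each $z^{(j)}$ has entries in $\BCF$ then $\tilde Q^{(j)}$ specialized at $z^{(j)}$ is a $\pm1$-diagonal matrix, hence \emph{unitary}; therefore $u := \ket{\psi^{(t)}(z^{(1)},\dots,z^{(t)})}$ and $v := \ket{\psi^{(t)}(z^{(t+1)},\dots,z^{(2t)})}$ are unit vectors. Since the $z^{(j)}$ are real, $a_{i,w}$ and $b_{i,w}$ evaluate to $\mathrm{Re}\,u_{i,w}$ and $\mathrm{Im}\,u_{i,w}$ on the first $t$ blocks and to $\mathrm{Re}\,v_{i,w}$ and $\mathrm{Im}\,v_{i,w}$ on the last $t$, so $p(z) = \sum_{\mathrm{accepting}} \mathrm{Re}\,(u_{i,w}\overline{v_{i,w}}) = \mathrm{Re}\,\langle \Pi v,\ \Pi u\rangle$, and Cauchy--Schwarz gives $|p(z)| \le \nrm{\Pi v}\,\nrm{\Pi u} \le \nrm{v}\,\nrm{u} = 1$. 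This is exactly where the hypothesis $z\in\BCF^{2t(n{+}1)}$ is essential (for general real $z$ the twisted state vectors need not have norm $\le 1$), and it is also why $p$ must be assembled from the real forms $a_{i,w}, b_{i,w}$ rather than taken as the formal product $q_{i,w}\,\overline{q_{i,w}}$, whose value would be complex off the diagonal.
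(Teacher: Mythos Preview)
Your argument is correct. The paper does not supply its own proof of this lemma---it is quoted as \cite[Lemma~20]{AA}---and your write-up is essentially the argument given there: replace the $j$-th oracle call by a diagonal operator in a fresh block $y^{(j)}$ so that amplitudes become degree-$t$ block-multilinear forms, duplicate the blocks for the ``bra'' side and combine via $aa'+bb'=\mathrm{Re}(q\,\overline{q'})$ to get a real degree-$2t$ form, and for (ii) use that on $\BCF$-inputs each diagonal operator is unitary so the two twisted final states are unit vectors and Cauchy--Schwarz gives $|p(z)|\le\|\Pi u\|\,\|\Pi v\|\le 1$. Your care in taking real and imaginary parts (rather than writing the formal product $q\,\overline{q'}$, which would not be a real polynomial) is exactly the point one has to get right here.
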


The first variable in each block (which is set to 1 in the requirement (i)) corresponds to the possibility that the
algorithm is not asking any of the actual variables $x_1, \ldots, x_n$ in a given query. (Although the statement of Lemma 20 in \cite{AA} does not mention such variables explicitly, they are used in the proof of the Lemma.)

\begin{definition}
\label{def:bmdeg}
Let the \textit{block-multilinear approximate degree} of $f$, or $\bmdega_{\eps}(f)$, be the minimum degree of any block-multilinear polynomial $p:\mathbb{R}^{k(n+1)}\rightarrow\mathbb{R}$, with $k$  blocks of $n+1$ variables each, such that
\begin{enumerate}
	\item[(i)] $\lrv{p \lr{\tilde x,\ldots, \tilde x}  -f(x) } \leq  \eps$ for all $x\in \BCF^{n}$ for which $f(x)$ is defined, and
	\item[(ii)] $p \lr{ x_{1,0 }, x_{1,1},\ldots, x_{1,n}, x_{2,0}, \ldots,x_{k,n}} \in \lrk{-1,1}  $ for all $x_{1,0},\ldots,x_{k,n}\in \BCF^{k(n+1)}$.
\end{enumerate}
$\bmdeg(f)$ denotes $\bmdega_0(f)$.
\end{definition}
As a particular case, this definition includes block-multilinear polynomials $ p : \mbb R^{kn} \to \mbb R $ which satisfy
\[
\forall x \in \BCF^n \  \lrv{p(x,\ldots,x) - f(x) } \leq \eps  \quad \text{and} \quad  \forall z \in \BCF^{kn} p(z) \in [-1,1],
 \]
because we can view them as polynomials $ p : \mbb R^{k(n+1)} \to \mbb R $ in which each monomial containing a variable
$ x_{1,0} $, $ x_{2,0} $, \ldots,  or $ x_{k,0} $ has a coefficient zero.

We have $\dega_{\epsilon}(f) \leq \bmdega_{\epsilon}(f) \leq 2 \Q_{\epsilon}(f)$. The first of the two inequalities follows by
taking $q(x)=p(\tilde x, \ldots, \tilde x)$. If $p$ satisfies the requirements of Definition \ref{def:bmdeg}, then $q$ satisfies the requirements of Definition \ref{def:deg}.
The second inequality follows from Lemma \ref{blockbeals}.

\subsection{Equivalence between block-multilinear and general polynomials}
\label{sec:equiv}

The two types of polynomial representations ($\dega$ and $\bmdega$) are equivalent to one another, up to some loss in
the quality of approximation. This has been shown independently by us and  by O'Donnell and Zhao \cite{DZ}:

\begin{theorem}
	\label{thm:dz}
	Let $p(x_1, \ldots, x_n)$ be a polynomial of degree $d$ with $|p(x_1, \ldots, x_n)|\leq 1$ for any $x\in\BCF^n$. 
          Then there is a block-multilinear polynomial 
	$\tilde{p}: R^{(n+1)d} \rightarrow R$ such that
	\begin{enumerate}
		\item
		$\tilde{p}(\tilde{x}, \ldots, \tilde{x}) = p(x)$ for any $x\in\BCF^n$;
		\item
		$|\tilde{p}(y)| \leq C_d$ for any $y\in\BCF^{(n+1)d}$ with $C_d$ being a constant that depends on
		the degree $d$ only.
	\end{enumerate}
\end{theorem}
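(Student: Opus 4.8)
The plan is to take for $\tilde p$ the natural symmetric multilinearization of $p$ and then to bound its values on the cube by a polarization identity, combined with the elementary fact that a multilinear polynomial bounded on $\BCF^{n}$ is automatically bounded, by the same constant, on all of $[-1,1]^{n}$. Since we only ever evaluate on $\BCF^{n}$, I would first replace $p$ by its multilinear representative (still of degree $\le d$), so assume $p(x)=\sum_{|S|\le d}\hat p(S)\prod_{i\in S}x_i$ is multilinear, and assume $d\ge 1$. Introduce $d$ blocks of variables $z^{(1)},\dots,z^{(d)}$, each $z^{(b)}=(z^{(b)}_0,z^{(b)}_1,\dots,z^{(b)}_n)$, and replace every degree-$k$ monomial $\prod_{i\in S}x_i$ by the average over all injections $\phi\colon S\hookrightarrow[d]$ of $\prod_{i\in S}z^{(\phi(i))}_i\cdot\prod_{b\notin\operatorname{Im}\phi}z^{(b)}_0$:
\[
\tilde p(z^{(1)},\dots,z^{(d)})\ :=\ \sum_{|S|\le d}\hat p(S)\,\frac{(d-|S|)!}{d!}\sum_{\phi\colon S\hookrightarrow[d]}\ \prod_{i\in S}z^{(\phi(i))}_i\ \prod_{b\notin\operatorname{Im}\phi}z^{(b)}_0 .
\]
Each monomial of $\tilde p$ uses exactly one variable from each block, so $\tilde p$ is block-multilinear, and in fact a symmetric $d$-linear form in the vector arguments $z^{(1)},\dots,z^{(d)}$. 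On the diagonal $z^{(b)}=\tilde x=(1,x_1,\dots,x_n)$ the dummy factors $z^{(b)}_0$ equal $1$ and each degree-$k$ monomial is hit by exactly $d!/(d-k)!$ injections, so $\tilde p(\tilde x,\dots,\tilde x)=p(x)$; this is property~1.

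For property~2, fix $y\in\BCF^{(n+1)d}$ with blocks $z^{(1)},\dots,z^{(d)}$. Because $\tilde p$ is block-multilinear, negating all $n+1$ variables of a single block negates $\tilde p$; doing this for every block with $z^{(b)}_0=-1$, I may assume $z^{(b)}_0=1$ for every $b$, changing $\tilde p(y)$ only by a sign. Let $g(v):=\tilde p(v,\dots,v)$; from the construction $g(v_0,\dots,v_n)=\sum_{|S|\le d}\hat p(S)\,v_0^{\,d-|S|}\prod_{i\in S}v_i$ is the degree-$d$ homogenization of $p$, so $g(v)=v_0^{\,d}\,p(v_1/v_0,\dots,v_n/v_0)$ whenever $v_0\neq 0$.

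Now apply the polarization identity for symmetric $d$-linear forms to $\tilde p$:
\[
\tilde p(z^{(1)},\dots,z^{(d)})\ =\ \frac1{d!}\sum_{T\subseteq[d]}(-1)^{\,d-|T|}\,g\!\Big(\sum_{b\in T}z^{(b)}\Big),
\]
and bound each term separately. For $T$ with $|T|=m$, put $w_T=\sum_{b\in T}z^{(b)}$; since every $z^{(b)}_0=1$ we have $(w_T)_0=m$, while $|(w_T)_i|\le m$ for $i\ge 1$ (a sum of $m$ signs). Hence $g(w_T)=m^{\,d}\,p\big((w_T)_1/m,\dots,(w_T)_n/m\big)$, with its argument in $[-1,1]^{n}$; and as $p$ is multilinear it is affine in each coordinate, so $\max_{[-1,1]^{n}}|p|=\max_{\BCF^{n}}|p|\le 1$, giving $|g(w_T)|\le m^{\,d}\le d^{\,d}$. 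The $T=\emptyset$ term is $g(0)=0$, so the triangle inequality yields $|\tilde p(y)|\le \frac1{d!}\sum_{m=1}^{d}\binom dm m^{\,d}=:C_d\le (2d)^{d}/d!$, a constant depending only on $d$.

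The construction and the diagonal identity are routine; the real content is the sup-norm bound, where the naive estimate $\sum_S|\hat p(S)|$ is useless because it grows with $n$, so genuine cancellation is needed. Two ingredients make it usable: the block-flipping normalization, which pins the ``homogenizing'' coordinate $(w_T)_0$ of each partial sum to $|T|$, so that $w_T/|T|$ lands inside $[-1,1]^{n}$ rather than in an uncontrolled box; and the fact that a multilinear polynomial bounded on the cube stays bounded by the same constant on all of $[-1,1]^{n}$, which lets the hypothesis $|p|\le 1$ do real work at non-Boolean points. Everything else — verifying the diagonal identity, checking the polarization formula, and the final summation — is bookkeeping; the piece I expect to demand the most care when writing it out is tracking the extra per-block coordinate $z^{(b)}_0$ that absorbs the inhomogeneity of $p$, though no new idea is required there.
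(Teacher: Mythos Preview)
Your proposal is correct and is essentially the same proof as the paper's: the same symmetric block-multilinear lift with a homogenizing coordinate $z^{(b)}_0$, the same sign-flip normalization to force $z^{(b)}_0=1$, the same polarization identity combined with the observation that a multilinear polynomial bounded on $\BCF^n$ is bounded by the same constant on $[-1,1]^n$, and the same resulting bound $C_d\le\frac{1}{d!}\sum_{m=1}^d\binom{d}{m}m^d$. The only cosmetic difference is the order in which you normalize and polarize.
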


O'Donnell and Zhao \cite{DZ} show $C_d\leq (2e)^d$.   In Appendix \ref{sec:higher}
we show our version of this result with $C_2 \leq 3$ for $d=2$  and $C_d = O(3.5911...^d)$. 

The result of O'Donnell and Zhao is a special case of  the general theory of {\em decoupling} \cite{Kwapien,PG}
which proves much more general results. In contrast, our proof is specific to the problem above but, due to that, 
it is quite simple and gives better constants $C_d$.

As a consequence of this Theorem, we have

\begin{corollary}
	\label{cor:equiv}
	Let $\epsilon$ be such that $0\leq \epsilon < \frac{1}{2}$ and let 
	$\epsilon' = \frac{1}{2} - \frac{1}{C_d} (\frac{1}{2}-\epsilon)$. Then
	$\dega_{\epsilon} (f)  \leq d$ implies  $\bmdega_{\epsilon'} (f) \leq d$.
\end{corollary}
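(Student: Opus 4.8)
The plan is to start from a general degree-$d$ polynomial $p$ witnessing $\dega_\epsilon(f)\le d$, affinely rescale it to a polynomial bounded by $1$ in absolute value so that Theorem~\ref{thm:dz} applies, block-multilinearize with that theorem, and then undo the rescaling — realizing the constant that the rescaling introduces by the product of the ``dummy'' leading variables of the $d$ blocks.

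Concretely, suppose $\dega_\epsilon(f)\le d$ and fix a polynomial $p$ of degree $\le d$ with $\lrv{p(x)-f(x)}\le\epsilon$ for all $x\in\BCF^n$ for which $f(x)$ is defined, and $p(x)\in[0,1]$ for all $x\in\BCF^n$ (conditions~1 and~2 of Definition~\ref{def:deg}). Put $q:=2p-1$; then $\lrv{q(x)}\le1$ for every $x\in\BCF^n$, so Theorem~\ref{thm:dz} applied to $q$ produces a block-multilinear polynomial $\tilde q:\mathbb R^{(n+1)d}\to\mathbb R$ — with $d$ blocks of $n+1$ variables $z_{i,0},\ldots,z_{i,n}$ — such that $\tilde q(\tilde x,\ldots,\tilde x)=q(x)$ for all $x\in\BCF^n$ and $\lrv{\tilde q(y)}\le C_d$ for all $y\in\BCF^{(n+1)d}$. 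Let $m(z):=\prod_{i=1}^{d}z_{i,0}$, a block-multilinear monomial of degree $d$ with $m(\tilde x,\ldots,\tilde x)=1$ and $\lrv{m(z)}\le1$ on $\BCF^{(n+1)d}$, and set
\[
P(z):=\frac{1}{2C_d}\,\tilde q(z)+\frac{1}{2}\,m(z).
\]

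Then $P$ is block-multilinear of degree $d$ with $d$ blocks of $n+1$ variables each, and, by the triangle inequality, $\lrv{P(z)}\le\frac{1}{2C_d}\cdot C_d+\frac12=1$ for every $z\in\BCF^{(n+1)d}$, so requirement~(ii) of Definition~\ref{def:bmdeg} holds. For requirement~(i), on the diagonal we get
\[
P(\tilde x,\ldots,\tilde x)=\frac{q(x)}{2C_d}+\frac12=\frac{p(x)}{C_d}+\frac12\left(1-\frac1{C_d}\right).
\]
Now I would split into the two cases $f(x)\in\{0,1\}$. If $f(x)=0$ then $0\le p(x)\le\epsilon$, so $P(\tilde x,\ldots,\tilde x)$ lies in an interval with endpoints $\frac12(1-\frac1{C_d})\ge0$ and $\frac{\epsilon}{C_d}+\frac12(1-\frac1{C_d})=\frac12-\frac1{C_d}(\frac12-\epsilon)=\epsilon'$; if $f(x)=1$ then $1-\epsilon\le p(x)\le1$, so $P(\tilde x,\ldots,\tilde x)-1$ lies in an interval with endpoints $-\epsilon'$ and $\frac12(\frac1{C_d}-1)\le0$. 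Either way $\lrv{P(\tilde x,\ldots,\tilde x)-f(x)}\le\epsilon'$, which yields $\bmdega_{\epsilon'}(f)\le d$.

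Given Theorem~\ref{thm:dz}, there is no genuine obstacle here; the corollary is essentially a bookkeeping computation. The only two points to be careful about are that the additive constant created when rescaling $q$ back must be written as $\frac12\prod_iz_{i,0}$ rather than as a literal constant, so that $P$ stays a multilinear form, and that the two scaling factors must be chosen so that $\frac1{2C_d}\cdot C_d+\frac12=1$ — this is exactly what makes the $[-1,1]$ bound hold and what pins down the stated value $\epsilon'=\frac12-\frac1{C_d}(\frac12-\epsilon)$. One also uses $C_d\ge1$ (so that $\frac12(1-\frac1{C_d})\ge0$), which holds for the constants furnished by Theorem~\ref{thm:dz}.
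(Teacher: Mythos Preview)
Your proof is correct and follows essentially the same approach as the paper's: shift the approximating polynomial to be centered and bounded, apply Theorem~\ref{thm:dz}, and rescale back. You are actually more careful than the paper in one respect: you write the additive constant as the block-multilinear monomial $\tfrac12\prod_i z_{i,0}$ rather than a literal $\tfrac12$, which is needed for the result to be a genuine multilinear form, a detail the paper's one-line proof suppresses.
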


\begin{proof}
	We take the polynomial $q$ which approximates
	$f(x_1, \ldots, x_n)$ with error $\epsilon$ according to Definition \ref{def:deg} and apply Theorem \ref{thm:dz} to
	$p(x_1, \ldots, x_n)= q(x_1, \ldots, x_n) - \frac{1}{2} $. Then the polynomial $\frac{1}{2}+ \frac{1}{2C_d}\tilde{p}$ approximates
	$f$ in the sense of Definition \ref{def:bmdeg}. 
\end{proof}

\subsection{Block-multilinear polynomials of degree 2}
\label{sec:deg2}

Let
\begin{equation}
\label{eq:d2}
p(x_1,\ldots,x_n,y_1,\ldots,y_m) = \sum_{\substack{i \in [ n]\\ j \in [m]}} a_{ij} x_{i} y_j,
\end{equation}
be a block-multilinear polynomial of degree 2, with the variables in the first block labeled as $ x_1, \ldots, x_n $ and the variables in the second block labeled as $ y_1,\ldots, y_m $. 
We say that $p$ is {\em bounded} 
if $|p(x_1, \ldots, x_n, y_1, \ldots, y_m)|\leq 1$ for all $x_1, \ldots, y_m\in\{-1, 1\}$.
Then, we have 
\[
\max_{\substack{
			x \in \BCF^n\\
			y \in \BCF^m
		}}
		\lrv{ \sum_{\substack{i \in [ n]\\ j \in [m]}} a_{ij} x_{i} y_j} \leq 1.
 \]
Let $ A $ be the $ n\times m $ matrix with entries $ a_{ij} $, then
\[
p(x,y) = x^T A y   \quad \text{for all }x \in \mbb R^n ,  \ y \in \mbb R^m
 \]
and $p$ being bounded translates to 
the infinity-to-1 norm of $ A $ being at most 1, i.e., $ \nrm{A}_{\infty\to 1} \leq 1 $.

\section{Equivalence between polynomials of degree 2 and 1-query quantum algorithms }

Let $f$ be a partial Boolean function. In this section, we show that 
the following two statements are equivalent\footnote{The equivalence here involves some loss in
	the error $\epsilon$. 
	However, the bound $\epsilon$ on the error probability of the resulting quantum algorithm 
	only depends on the error of the polynomial approximation from which
	we started and does not increase with the number of variables $n$.}:
\begin{enumerate}
	\item[(a)]
	$\Q_{\epsilon}(f)\leq 1$ for some $\epsilon$ with $0\leq \epsilon < \frac{1}{2}$;
	\item[(b)]
	$ \bmdega_{\epsilon'}(f) \leq 2 $ for some $\epsilon'$ with $0\leq \epsilon' < \frac{1}{2}$;
	%\item[(c)]
	%$ \dega_{\epsilon''}(f) \leq 2 $ for some $\epsilon''$ with $0\leq \epsilon''  < \frac{1}{2}$.
\end{enumerate}

Given (a), Lemma \ref{blockbeals} implies that (b) holds with $\epsilon'=\epsilon$.
We now show that (b) implies (a) with $\epsilon=\frac{K+\epsilon'}{2(K+1)}$ where $K$ is Grothendieck's constant.

Because of results in Section \ref{sec:equiv}, $ \dega_{\epsilon''}(f) \leq 2 $ implies $\bmdega_{\epsilon'}(f)\leq 2$ for $\epsilon'=\frac{1+\epsilon''}{3}$.
Therefore we also get a similar equivalence between 
$\Q_{\epsilon}(f)\leq 1$ and $ \dega_{\epsilon''}(f) \leq 2 $.

\begin{theorem}
Let $f$ be a partial Boolean function.  If $ \bmdega_{\epsilon'}(f) \leq 2 $, then 
$\Q_{\epsilon}(f)\leq 1$ for $\epsilon=\frac{K+\epsilon'}{2(K+1)}$.  
\end{theorem}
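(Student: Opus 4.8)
The plan is to start from a block-multilinear polynomial $p$ of degree $2$ witnessing $\bmdega_{\epsilon'}(f)\le 2$, i.e.\ a polynomial in $2(n+1)$ variables split into two blocks, bounded by $1$ on $\BCF^{2(n+1)}$, with $p(\tilde x,\tilde x)$ approximating $f(x)$ to error $\epsilon'$. Writing the two blocks of variables as $u_0,\dots,u_n$ and $v_0,\dots,v_n$, the polynomial has the form $p(u,v)=u^T A v=\sum_{i,j} a_{ij} u_i v_j$ for an $(n+1)\times(n+1)$ real matrix $A$, and the boundedness condition translates (as recalled in Section~\ref{sec:deg2}) to $\nrm{A}_{\infty\to 1}\le 1$. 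The goal is to build a $1$-query quantum algorithm whose acceptance probability on input $x$ is a suitable affine image of $\tilde x^T A\,\tilde x$, so that it inherits the approximation of $f$.

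The key step is to invoke Grothendieck's inequality in the direction it is actually stated for in the Preliminaries: since $\nrm{A}_{\infty\to1}\le 1$, for \emph{any} unit vectors $\{\psi_i\},\{\phi_j\}$ (real) we get $\sum_{i,j} a_{ij}\langle \psi_i,\phi_j\rangle \le K$; equivalently the associated bilinear form on vector inputs has norm at most $K$. I would use this to produce, after scaling $A$ by $1/K$, a genuine unitary implementation. Concretely, a $1$-query algorithm on input $x\in\BCF^n$ applies a unitary $U_0$, then the phase-query oracle $O_x$ (which acts as $\ket i\mapsto x_i\ket i$ on query register, with $\ket 0$ untouched — matching the role of the ``first variable of each block set to $1$''), then $U_1$, then measures. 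The acceptance amplitude is of the form $\sum_i \langle\beta|U_1|i\rangle\,(\tilde x)_i\,\langle i|U_0|\alpha\rangle$, which is exactly a bilinear form $\tilde x^T M\,\tilde x$ whose entries are inner products of rows of $U_1$ and columns of $U_0$; Grothendieck's theorem is precisely the tool that lets us realize the matrix $\frac{1}{K}A$ (suitably symmetrized) by such inner products of unit vectors, and then pad these unit vectors into honest unitary matrices $U_0,U_1$ of large enough dimension. So on input $x$ the amplitude equals $\frac{1}{K}\,\tilde x^T A\,\tilde x = \frac{1}{K}\,p(\tilde x,\tilde x)\in[-\frac1K,\frac1K]$.

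From there the argument is bookkeeping: the squared amplitude is too weak, so instead I would arrange the algorithm so that some measurement outcome occurs with probability $\frac{1}{2}+\frac{1}{2K}\,p(\tilde x,\tilde x)$ — a standard trick, e.g.\ putting the amplitude $\tfrac1K p(\tilde x,\tilde x)$ on a ``swap-test''-style ancilla or using a Hadamard-test layout so that acceptance probability is linear (not quadratic) in the amplitude. Since $p(\tilde x,\tilde x)\in[\,f(x)-\epsilon',\,f(x)+\epsilon'\,]$ and $f(x)\in\{0,1\}$, the acceptance probability lies within $\frac{\epsilon'}{2K}$ of $\frac12+\frac{1}{2K}f(x)$, i.e.\ it is at least $\frac12+\frac{1}{2K}-\frac{\epsilon'}{2K} = 1-\frac{K+\epsilon'}{2(K+1)}$ when $f(x)=1$ and at most $\frac{K+\epsilon'}{2(K+1)}$ when $f(x)=0$; rescaling the success threshold gives error $\epsilon=\frac{K+\epsilon'}{2(K+1)}<\frac12$.

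The main obstacle I expect is the ``vectors-to-unitaries'' step: Grothendieck's inequality hands back unit vectors $\psi_i,\phi_j$ realizing the inner products, but a $1$-query algorithm forces these to be (columns of) fixed unitaries $U_0,U_1$ acting on the same Hilbert space, with the special index $0$ playing a distinguished role (the ``no real query'' branch), and one must check the dimensions and the normalization of the \emph{whole} state (not just one amplitude) are consistent — in particular that the $\ket 0$-entry behaves correctly and that after rescaling by $1/K$ there is enough ``room'' in a larger ancilla register to complete the partial isometries to unitaries and to realize the linear-in-amplitude acceptance probability. A secondary subtlety is that Grothendieck's constant is defined for real vectors while quantum amplitudes are complex; this only helps (complex Grothendieck constant is smaller), but one should make sure the stated constant $K$ is being used consistently. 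Handling the real-vs-complex point and the completion-to-unitary carefully is the crux; the rest is the affine-rescaling computation already sketched above.
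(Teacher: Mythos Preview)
Your outline has the right overall shape --- a Hadamard-style test producing acceptance probability $\tfrac{1}{2}+\tfrac{1}{2K}\,p(\tilde x,\tilde x)$, followed by an affine rebalancing --- and this is indeed what the paper does. But the crucial middle step, where you pass from the matrix $A$ to unitaries, does not work as you describe. First, a one-query amplitude $\sum_i\langle\beta|U_1|i\rangle\,\tilde x_i\,\langle i|U_0|\alpha\rangle$ is \emph{linear} in $\tilde x$, not a quadratic form $\tilde x^{T} M\tilde x$; only the probability is degree two, and to make that probability depend linearly on a bilinear expression one must use exactly the controlled-superposition layout (two $x$-dependent states, each costing one query, prepared in superposition so the total is still one query). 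Second, and more seriously, Grothendieck's inequality in the form you quote (``for all unit vectors $\psi_i,\phi_j$, $\sum_{i,j} a_{ij}\langle\psi_i,\phi_j\rangle\le K$'') is an \emph{upper bound}, not a factorization: it does not let you \emph{realize} $\tfrac{1}{K}A$ as a matrix of inner products of unit vectors, nor does it produce unitaries. Concretely, running the test with the uniform states $\tfrac{1}{\sqrt{n}}\sum_i \tilde x_i\ket{i}$ requires $\sqrt{nm}\,\|A\|\le K$, which is \emph{false} in general under $\|A\|_{\infty\to1}\le1$ (e.g.\ $A=e_1e_1^{T}$ has $\|A\|_{\infty\to1}=\|A\|=1$ but $\sqrt{nm}\,\|A\|=\sqrt{nm}$).

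The paper closes this gap by a preprocessing step you omit: \emph{variable splitting} (replacing a variable $x_j$ by the average of many fresh copies) leaves $\|A\|_{\infty\to1}$ unchanged but can drive the ratio $\sqrt{n'm'}\,\|A'\|\,/\,\|A'\|_{\infty\to1}$ down to $K+\delta$ (Lemma~\ref{conj:2}); Grothendieck's inequality enters only inside the proof of that lemma, via its equivalent \emph{factorization} characterization $A=D_w\tilde A D_v$ with $\|\tilde A\|\le1$ and $\|w\|=\|v\|=1$. Once $\sqrt{n'm'}\,\|A'\|\le K+\delta$, one embeds the contraction $\sqrt{n'm'}\,A'/(K+\delta)$ as the top block of a unitary and runs the SWAP test, then mixes in a deterministic ``output $0$'' with probability $\tfrac{1}{2K+1}$ to get the stated error. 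A direct route avoiding splitting \emph{is} possible --- prepare the weighted states $\sum_i w_i\tilde x_i\ket{i}$ and $\sum_j v_j\tilde x_j\ket{j}$ and sandwich a unitary dilating $\tilde A$ --- but that needs precisely the factorization form of Grothendieck, not the inequality you invoke.
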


\begin{proof}
We start with two technical lemmas.

\begin{lemma}\label{th:embedB}
	If a $ n \times  m$ complex matrix $ B $ satisfies $ \nrm B \leq C $, then there exists a unitary $U$ (on a  possibly larger space  with basis states $\ket{1}, \ldots, \ket{k}$ for some $k\geq \max(n, m)$) such that, for any unit vector $\ket{y}= \sum_{i=1}^{m} \alpha_i \ket{i}$, $U\ket{y} = \frac{B\ket{y}}{C} + \ket{\phi}$, with $\ket{\phi}$ consisting of basis states $\ket{i}$, $ i>n$ only.
\end{lemma}

\begin{proof}
Without the loss of generality, we can assume that $C=1$ (otherwise, we just replace the matrix $B$ by $\frac{B}{C}$).

Let $A=I-B^{\dagger} B$. Since $\|B\|\leq 1$, the eigenvalues of $B^{\dagger} B$ are at most 1 and, hence,
$A$ is positive semidefinite. Let $A=V^{\dagger} \Lambda V$ be the eigendecomposition of $A$, with $V$ being a
unitary matrix and $\Lambda$ a diagonal matrix. We take $W=\sqrt{\Lambda} V$. Then, $A=W^{\dagger} W$
and, if we take the block matrix $U=\left( \begin{array}{c} B  \\ W\end{array}\right)$, 
we get $U^{\dagger} U = B^{\dagger} B + W^{\dagger} W = I$.

Let $k \times m$ be the size of the matrix $U$. For any $i\in\{1, \ldots, m\}$, we have
$\bra{i} U^{\dagger} U \ket{i} = \bra{i} I \ket{i}=1$ and for any $i, j\in\{1, \ldots, m\}: i\neq j$,
we have $\bra{i} U^{\dagger} U \ket{j} = \bra{i} I \ket{j} = 0$.
Therefore, $U\ket{1}, \ldots, U\ket{m}$ are orthogonal vectors of length 1 and we can complete
$U$ to a $k\times k$ unitary matrix by choosing $U\ket{m+1}, \ldots, U\ket{k}$ so that they are orthogonal
(both one to another and to $U\ket{1}, \ldots, U\ket{m}$) and of length 1.
\end{proof}

\begin{lemma}
	\label{cl:alg}
	Let $A=(a_{ij})_{i\in [n], j\in[m]}$ with $\sqrt{nm}\|A\| \leq C$ and let
	\[
	p(x_1, \ldots, x_n, y_1, \ldots, y_m) = \sum_{i=1}^{n} \sum_{ j=1}^m a_{ij} x_i y_j .
	 \]
Then, there is
a quantum algorithm that makes 1 query to $x_1, \ldots, x_n$,  $y_1, \ldots, y_{m}$ and outputs 1 with probability
	\[
	r = \frac{1}{2} \lr{
		1 + \frac{p(x_1, \ldots, x_n, y_1, \ldots, y_{m})}{C}
		} .
	\]
\end{lemma}

\begin{proof}
Let $B=\sqrt{n  m}\, A$, $A=(a_{ij})$. Then,
	\[
	 \nrm B = \nrm {A}\sqrt {nm}  \leq C.
	\]
	The 1-query quantum algorithm uses a version of the well-known SWAP test \cite{BCWW}
for estimating the inner product $|\lbra \psi \ket{\psi'}|$ of two quantum states $\ket{\psi}$ and $\ket{\psi'}$.
	Our test works by preparing the state
	\begin{equation}
	\label{eq:swap}
	\frac{1}{\sqrt{2}} \ket{0}\ket{\psi} +
	\frac{1}{\sqrt{2}} \ket{1}\ket{\psi'}
	\end{equation}
	and then performing the Hadamard transformation on the first qubit and measuring the first qubit\footnote{This test is slightly different from the standard SWAP test in which one prepares both $\ket{\psi}$ and $\ket{\psi'}$ and then performs a SWAP gate conditioned by a qubit that is initially in the $\frac{1}{\sqrt{2}}\ket{0}+\frac{1}{\sqrt{2}}\ket{1}$ state. Because of this difference, we can perform the SWAP test with just 1 query instead of 2 (one for $\ket{\psi}$ and
one for $\ket{\psi'}$). Another result of this difference is that the probability of measuring 0 changes from
$\frac{1}{2} (1 + \left| \lbra \psi \ket{\psi'} \right|^2)$ for the standard SWAP test to
$\frac{1}{2} (1+ \Re \lbra \psi \ket{\psi'})$ for our test.}.
	The probability that the result of the measurement is 0
	is equal to
	\[
	r_0= \frac{1}{2} (1+ \Re \lbra \psi \ket{\psi'})
	\]
	where $\Re x$ denotes the real part of a complex number $x$.

	By Lemma \ref{th:embedB}, there is a unitary $ U $ s.t.
	 for any unit vector $\ket{y}= \sum_{i=1}^{m} \alpha_i \ket{i}$ we have
	 $U\ket{y} = \frac{B\ket{y}}{C} + \ket{\phi}$, with $ \braket i  \phi =0 $ for all $ i \in [n] $.

	The algorithm applies SWAP test to
	$\ket{x}= \frac{1}{\sqrt{n}} \sum_{i=1}^n x_i \ket{i}$
	and $U\ket{y}$,
	$\ket{y} = \frac{1}{\sqrt{m}} \sum_{i=1}^{m} y_i \ket{i}$.
	Each of those states can be prepared with one query (to $x_i$'s or $y_i$'s). Hence, we can also prepare the state (\ref{eq:swap}) with one query.
	The inner product $\lbra \psi \ket{\psi'}$ that is being estimated is equal to
	\[
	\lbra x |U \ket{y}
	=
	\frac{1}{C} \lbra x |B\ket{y}
	=
	\frac{1}{C} p(x_1, \ldots, x_n, y_1, \ldots, y_m) .
	\]
\end{proof}

Let $p(x_1, \ldots, x_n, y_1, \ldots, y_m) = \sum_{i=1}^{n} \sum_{ j=1}^m a_{ij} x_i y_j$ be the polynomial
from Definition \ref{def:bmdeg} which shows that $\bmdega_{\epsilon'}(f)=2$. Then, as we argued in subsection \ref{sec:deg2},
the matrix $A=(a_{ij})$ satisfies $\|A\|_{\infty\rightarrow 1}\leq 1$. Although this does not imply that $\|A\|$ is sufficiently small, we can preprocess the polynomial $p$ so that we achieve $\sqrt{n'm'}\|A'\|\leq K$ for the
$n'$-by-$m'$ matrix $A'$ of coefficients of the polynomial after the preprocessing.

To preprocess the polynomial, we perform an operation called \textit{variable-splitting} \cite{AA}. The operation consists of taking a variable $x_{j}$ (or $y_j$) and replacing it by $m$ variables, in the following way.  We introduce $m$ new variables $x_{l_{1}},\ldots,x_{l_{m}}$, and define $p^{\prime}$ as the polynomial obtained by substituting $\frac{x_{l_{1} }+\cdots+x_{l_{m}}}{m}$ in the polynomial $p$ instead of $x_{j}$.
If we substitute $x_{l_1}=\ldots=x_{l_m}=x_j$, $p^{\prime}$ is equal to $p(x_1, \ldots, x_n, y_1, \ldots, y_m)$.
Thus, being able to evaluate $p^{\prime}$ implies being able to evaluate $p$ (in the same sense of the word ``evaluate").

In Appendix \ref{sec:prf_ratio}, we show

\begin{lemma} \label{conj:2}
	If a polynomial
	\[
	p(x_1, \ldots, x_n, y_1, \ldots, y_m) = \sum_{i=1}^{n} \sum_{ j=1}^m a_{ij} x_i y_j .
	 \]
	 satisfies $p(x,y)\in [-1, 1]$ for all   $ x \in  \BCF^n $, $ y \in \BCF^m $, then for every $ \delta>0 $ there exists  a sequence of row and column splittings that transforms $A=(a_{ij})$ to an $n' \times m'$ matrix $A'=(a'_{ij})$ that satisfies
	\[
	\frac{
		\nrm {A'} \,  \sqrt{n' m'}
	}{
	\nrm {A'}_{\infty \to 1}
}
\leq K+\delta.
	 \]
\end{lemma}
%The proof   is provided in Section \ref{sec:prf_ratio} (see Lemma \ref{th:claim0}).

Then, we can apply Lemma \ref{cl:alg} with $C=K+\delta$ to evaluate the polynomial
	\[
	p'(x'_1, \ldots, x'_{n'}, y'_1, \ldots, y'_{m'}) = \sum_{i=1}^{n'} \sum_{ j=1}^{m'} a_{ij} x'_i y'_j .
	 \]
for $(x'_1, \ldots, x'_{n'}, y'_1, \ldots, y'_{m'})$ 
which corresponds to the point $(x_1, \ldots, x_n, y_1, \ldots, y_m)$
at which we want to evaluate the original polynomial $p(x_1, \ldots, y_m)$.

If $p(x,y)\in[0, \epsilon']$, then Lemma \ref{cl:alg} gives $r \leq (1+\frac{\epsilon'}{K})/2$.
If $p(x,y)\in[1-\epsilon', 1]$, then  $r \geq (1+\frac{1-\epsilon'}{K})/2$.

We now consider an algorithm which outputs 0 with probability $\frac{1}{2K+1}$ and runs the algorithm
of Lemma \ref{cl:alg} otherwise (with probability $\frac{2K}{2K+1}$). Let $q$ be the probability of this algorithm outputting 1.
If $p(x,y)\in[0, \epsilon']$, then $q = \frac{2K}{2K+1} r \leq \frac{K+\epsilon'}{2K+1}$.
If $p(x,y)\in[1-\epsilon', 1]$, then $q = \frac{2K}{2K+1} r \geq \frac{K+1-\epsilon'}{2K+1}$.
Thus, we have a quantum algorithm with a probability of error which is at most $\epsilon = \frac{K+\epsilon'}{2K+1}$.

\end{proof}

\section{Results on polynomials of higher degrees}

\subsection{\texorpdfstring{$ \bmdeg$}{~bmdeg} and \texorpdfstring{$ \deg$}{~deg} vs. \texorpdfstring{$ \Q $}{Q}}
\label{sec:more}

The biggest known separation between $\deg$ and $Q$ is $Q(f)=\tilde{\Omega}(\deg^2(f))$,
recently shown by Aaronson et al. \cite{ABK} using a novel
{\em cheat-sheet} technique. We extend this result to

\begin{theorem}
\label{thm:abk}
There exists $f$ with $Q(f)=\tilde{\Omega}(\bmdeg^2(f))$.
\end{theorem}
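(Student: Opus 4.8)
The plan is to exhibit a total Boolean function $f$ with $Q(f) = \tilde\Omega(n)$ but $\bmdeg(f) = \tilde O(\sqrt n)$, which by definition proves $Q(f) = \tilde\Omega(\bmdeg^2(f))$. I would take $f$ to be exactly one of the cheat-sheet functions of Aaronson, Ben-David and Kothari \cite{ABK} that already witnesses $Q(f) = \tilde\Omega(\deg^2(f))$: such a function has $Q(f) = \tilde\Omega(n)$ and an ordinary approximating (or even exactly representing, for the appropriate variant) polynomial of degree $\tilde O(\sqrt n)$. The lower bound $Q(f) = \tilde\Omega(n)$ is inherited verbatim from \cite{ABK}, so nothing needs to be reproved there. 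The whole content of the theorem is therefore to upgrade the \emph{ordinary} low-degree polynomial representation guaranteed by \cite{ABK} into a \emph{block-multilinear} low-degree representation of comparable degree.

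For that upgrade I would invoke Theorem \ref{thm:dz} (equivalently Corollary \ref{cor:equiv}) from Section \ref{sec:equiv}: a bounded polynomial of degree $d$ on $\BCF^n$ can be converted into a block-multilinear polynomial of the same degree $d$ whose values on $\BCF^{(n+1)d}$ are bounded by a constant $C_d$ depending only on $d$. The issue is that here $d = \tilde O(\sqrt n)$ is \emph{not} constant, so $C_d = O(3.5911\ldots^{\,d})$ (or $(2e)^d$ in the O'Donnell--Zhao version) blows up super-polynomially and we cannot simply rescale by $1/C_d$ to land back in $[-1,1]$. The key trick is to \emph{not} apply the decoupling bound to the whole degree-$\tilde O(\sqrt n)$ polynomial, but to exploit the special structured form of the cheat-sheet polynomial: it is (a bounded combination of) products and sums of low-degree ``gadget'' blocks, each of only $O(\mathrm{polylog})$ degree, composed in a shallow way. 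One block-multilinearizes each small gadget separately — there $C_d$ with $d = O(\mathrm{polylog}\, n)$ is only quasi-polynomial, which is absorbed into the $\tilde O(\cdot)$ — and then recombines the block-multilinear gadgets, merging their blocks, so that the final object is still a block-multilinear polynomial of degree $\tilde O(\sqrt n)$ and, after a single quasi-polynomial rescaling, has values in $[-1,1]$ on all Boolean inputs.

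Concretely I would proceed in the following steps. First, recall the structure of the cheat-sheet function $f = g_{\mathrm{CS}}$: it evaluates a base function $g$ (with $\deg(g) = \tilde O(\sqrt n)$ coming from an $O(1)$-degree inner function composed with a polylog-degree combiner) on several blocks of input and then checks a ``cheat sheet'' consistent with the claimed answers. Second, write down the natural ordinary polynomial $p$ of degree $\tilde O(\sqrt n)$ that represents $f$, keeping track of how it decomposes into a bounded arithmetic combination of sub-polynomials each of degree $O(\mathrm{polylog}\, n)$ (the individual gadget evaluations, the AND/OR of the checks, etc.). Third, apply Theorem \ref{thm:dz} to each sub-polynomial of degree $O(\mathrm{polylog}\, n)$, obtaining block-multilinear replacements with boundedness constant $C_{\mathrm{polylog}} = 2^{\tilde O(1)} = n^{o(1)}$. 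Fourth, reassemble: products of block-multilinear polynomials are block-multilinear on the disjoint union of their block sets, and sums are block-multilinear provided the block partitions are aligned, so the combined polynomial $\tilde p$ is block-multilinear of degree $\tilde O(\sqrt n)$ with overall boundedness constant the product of the finitely (polylog-)many gadget constants, still $n^{o(1)}$; dividing $\tilde p$ by this constant yields a block-multilinear representation in the sense of Definition \ref{def:bmdeg}, so $\bmdeg(f) = \tilde O(\sqrt n)$.

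The main obstacle I expect is exactly this boundedness-constant bookkeeping: one has to verify that the cheat-sheet construction really does decompose as a \emph{shallow} (constant or $O(1)$-level) composition of polylog-degree pieces, so that only $n^{o(1)}$ many multiplicative $C_d$ factors accumulate, rather than a factor per layer of a deep composition which would again give something super-polynomial. A secondary nuisance is aligning the block partitions when taking sums of block-multilinearized sub-polynomials — one may need to pad with ``dummy'' first-coordinate-only monomials (as in the particular case discussed after Definition \ref{def:bmdeg}) and introduce extra blocks set to $1$ — but this only costs a constant factor in the number of blocks and does not affect the $\tilde O(\sqrt n)$ bound. If a fully modular decomposition proves awkward, the fallback is to go through the composition theorems for approximate degree directly on the block-multilinear side, block-multilinearizing the $O(1)$-degree inner function once and propagating it through the polylog-degree combiner by hand.
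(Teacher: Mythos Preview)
Your proposal has a genuine gap, and it misidentifies where the difficulty lies.

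First, the structural claim is incorrect: the cheat-sheet polynomial for the $Q$-vs-$\deg$ separation does \emph{not} decompose into polylog-degree gadgets. The certificate-check factors $r_{i,m_i,S_i}$ are products of $\frac{1\pm x_j}{2}$ over $|S_i|=C(g)=\tilde O(\sqrt n)$ variables, and the cheat-sheet indicators $q_{m,S_1,\ldots,S_c}$ have degree $cC(g)\log n=\tilde O(\sqrt n)$ as well. So applying Theorem~\ref{thm:dz} to any individual gadget already incurs $C_d$ with $d=\tilde O(\sqrt n)$, which is super-polynomial. There is no ``$O(1)$-degree inner function composed with a polylog-degree combiner'' here; the base function $g$ is chosen for its certificate complexity, and that certificate complexity is what sets the degree of each factor.

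Second, and more fundamentally, even if each factor could be block-multilinearized with constant~$1$ (which the paper in fact achieves by a direct symmetrization, \emph{not} via Theorem~\ref{thm:dz}), the hard part is the \emph{sum}. The polynomial is $p=\sum_m\sum_{S_1,\ldots,S_c} q_{m,S_1,\ldots,S_c}\prod_i r_{i,m_i,S_i}$. In the ordinary setting this sum is bounded because for every input at most one summand is nonzero. After block-multilinearization this disjoint-support property fails: for $m\neq m'$ the block-multilinear $q$'s read \emph{different} variables from each block and can be simultaneously $\pm 1$, so one is left with a genuine sum of $2^c$ products. The paper's proof handles exactly this via an ad-hoc construction (one factor per block, averaged over bijections) together with Lemma~\ref{lem:abk}, which inductively bounds $\bigl|\sum_{m\in\{0,1\}^c} a_m\prod_i r_{i,m_i,S_{m,i}}\bigr|\le 1$ by exploiting the combinatorial fact that any $0$-certificate and $1$-certificate of $g$ must disagree on a shared variable, forcing $|r_{\cdot,0,\cdot}|+|r_{\cdot,1,\cdot}|\le 1$. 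Your proposal has no analogue of this step, and no amount of modular decoupling will supply it: the boundedness relies on cancellation specific to the certificate structure, not on bounds for the individual pieces.
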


\proof
In Appendix \ref{app:abk}.
\qed

Aaronson et al. \cite{ABK} also show a separation $Q(f)=\tilde{\Omega}(\dega(f))^4)$
which does not seem to give $Q(f)=\tilde{\Omega}(\bmdega(f))^4)$. 
(For the natural way of transforming the approximating polynomial of \cite{ABK} into a block-multilinear form,
the resulting block-multilinear polynomial $p(z^{(1)}, z^{(2)}, \ldots)$ can take values that are exponentially large
(in its degree) if the blocks $z^{(1)}, z^{(2)}, \ldots$ are not all equal.)

Because of Theorem \ref{thm:abk}, there is no transformation from a polynomial of degree $2k$ that approximates $f(x_1, \ldots, x_n)$ with error $\epsilon<1/2$ to a quantum algorithm with $k$ queries and error $\epsilon'<1/2$, with $\epsilon$ and $\epsilon'$ independent of $k$. 

However, there may be a transformation from polynomials of degree $2k$ to quantum algorithms with $k$ queries, with the error $\epsilon'=g(\epsilon, k)$
of the resulting quantum algorithm depending on $k$ but not on function $f(x_1, \ldots, x_n)$ or the number of variables $n$. 

Theorem \ref{thm:abk} implies the following limit on such transformations:

\begin{theorem}
There is a sequence of Boolean functions $f^{(1)}, f^{(2)}, \ldots$ such that,
for any sequence of quantum algorithms $\A_1, \A_2, \ldots$ computing them with $O(\bmdeg(f_i))$ queries, 
the probability of correct answer is at most
\[ \frac{1}{2} + O\left( \frac{1}{\bmdeg(f^{(i)})} \right) .\]
\end{theorem}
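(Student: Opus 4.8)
The plan is to reduce the statement to a \emph{bias-amplification} argument and then invoke the query lower bound of Theorem~\ref{thm:abk}. Let $f^{(1)}, f^{(2)}, \ldots$ be the sequence of cheat-sheet functions witnessing Theorem~\ref{thm:abk}, so that $Q(f^{(i)}) = \tilde{\Omega}(\bmdeg(f^{(i)})^2)$ and $\bmdeg(f^{(i)}) \to \infty$. Suppose $\A_i$ makes $t_i = O(\bmdeg(f^{(i)}))$ queries and, on every input $x$ on which $f^{(i)}$ is defined, outputs $f^{(i)}(x)$ with probability at least $\tfrac12 + \gamma_i$; the goal is to show $\gamma_i = \tilde{O}(1/\bmdeg(f^{(i)}))$. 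Writing $q(x)$ for the probability that $\A_i$ outputs $1$, the hypothesis is exactly that $q(x) \ge \tfrac12 + \gamma_i$ whenever $f^{(i)}(x) = 1$ and $q(x) \le \tfrac12 - \gamma_i$ whenever $f^{(i)}(x) = 0$, i.e.\ $q$ separates the $1$-inputs of $f^{(i)}$ from its $0$-inputs with a $2\gamma_i$-wide margin around $\tfrac12$.

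The key step is that this margin can be widened to a constant at the cost of only a factor $O(1/\gamma_i)$ in the query count. The naive route---running $\A_i$ independently $\Theta(\gamma_i^{-2})$ times and taking a majority vote---costs a factor $\gamma_i^{-2}$, and would give only the weaker bound $\gamma_i = \tilde{O}(\bmdeg(f^{(i)})^{-1/2})$. Instead I would use amplitude estimation (Brassard, H\o yer, Mosca and Tapp): run $\A_i$ coherently, deferring its final measurement, take the subspace in which the output qubit equals $1$ as the \emph{good} subspace, and estimate its amplitude to additive error below $\gamma_i/2$. This needs $O(1/\gamma_i)$ coherent invocations of $\A_i$ and of $\A_i^{-1}$, interleaved with reflections about the good subspace and about the start state $\ket{0}$; since $\A_i^{-1}$ uses the same $t_i$ queries as $\A_i$ and the reflections use no queries, the total is $O(t_i/\gamma_i) = O(\bmdeg(f^{(i)})/\gamma_i)$ queries, and with constant probability the outcome reveals on which side of $\tfrac12$ the value $q(x)$ lies, hence $f^{(i)}(x)$. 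Thus $Q(f^{(i)}) = O(\bmdeg(f^{(i)})/\gamma_i)$.

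Combining this with Theorem~\ref{thm:abk} gives $\bmdeg(f^{(i)})/\gamma_i = \tilde{\Omega}(\bmdeg(f^{(i)})^2)$, hence $\gamma_i = \tilde{O}(1/\bmdeg(f^{(i)}))$, which is the asserted bound on the excess of the success probability over $\tfrac12$. I do not expect a real obstacle: the statement is essentially a corollary of Theorem~\ref{thm:abk}. The one point needing care is the amplification step, where one must use amplitude estimation rather than classical repetition so that the overhead is $\gamma_i^{-1}$ and not $\gamma_i^{-2}$; this relies only on the standard facts that a query algorithm can be run coherently and that its inverse, together with the two reflections, costs no additional queries. (As usual, the $\tilde{\Omega}$ in Theorem~\ref{thm:abk} hides polylogarithmic factors, so what one obtains is $\tfrac12 + \tilde{O}(1/\bmdeg(f^{(i)}))$, and these factors do not affect the qualitative conclusion.)
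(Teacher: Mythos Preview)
Your proposal is correct and follows essentially the same approach as the paper: take the cheat-sheet functions from Theorem~\ref{thm:abk}, assume an algorithm with bias $\gamma_i$ using $O(\bmdeg(f^{(i)}))$ queries, amplify via amplitude estimation at cost $O(1/\gamma_i)$ (not $O(1/\gamma_i^2)$), and contrast the resulting $O(\bmdeg(f^{(i)})/\gamma_i)$-query bounded-error algorithm with the lower bound $Q(f^{(i)})=\tilde{\Omega}(\bmdeg(f^{(i)})^2)$. Your write-up is in fact more careful than the paper's about why amplitude estimation is needed over naive repetition and about the polylogarithmic slack inherited from the $\tilde{\Omega}$.
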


\begin{proof}
Let $f$ be the function from Theorem \ref{thm:abk}. %, defined using $g:\{0, 1\}^n\rightarrow \{0, 1\}$.
Then, we have $\bmdeg(f)=\tilde{O}(\sqrt{n})$.

If we have a quantum algorithm $\A$ that computes a function $f$ with a probability of correct answer at least 
$\frac{1}{2}+\delta$, we can use amplitude estimation \cite{BHMT} to estimate whether $\A$ produces answer
$f=1$ with probability at least $\frac{1}{2}+\delta$ or with probability at most $\frac{1}{2}-\delta$. 
The standard analysis of amplitude estimation \cite{BHMT} shows that we can obtain 
an estimate that is correct with probability at least 
$2/3$, with $O(1/\delta)$ repetitions of $\A$. To avoid a contradiction with
$Q_{\epsilon}(f)=\Omega(n)$, we must have
\[ \frac{\sqrt{n}}{\delta} = \Omega(n) \]
which implies $\delta = O(\frac{1}{\sqrt{n}})$.
\end{proof}

A result with a weaker bound on the error is, however, possible. For example, it is possible
that $\dega_{1/2-\delta}(f)=2k$ or $\bmdega_{1/2-\delta}(f)=2k$
implies a quantum algorithm which makes $k$ queries and has the error probability 
at most $\frac{1}{2}-\Omega(\frac{\delta}{2^k})$ or at most $\frac{1}{2}-\Omega(\frac{\delta}{k^2})$. 

\subsection{Equivalence between general and block-multilinear polynomials}

By Corollary \ref{cor:equiv}, $\dega_{\epsilon}(f) \leq d$ implies $\bmdega_{\epsilon'} (f) \leq d$ 
with $\epsilon'$ that depends on $\epsilon$ and $d$ only. 
Therefore, if we want to extend the equivalence between quantum algorithms and polynomials to larger $d=O(1)$, 
it suffices to show how to transform block-multilinear polynomials into quantum algorithms. 

Also, Aaronson and Ambainis \cite{AA} showed that a quantum algorithm
which makes $d$ queries can be simulated by a classical algorithm making $O(n^{1-1/2d})$ queries, based on the following result
\begin{theorem}
	\label{thm:aa}
	\cite{AA}
	Let $ h : \mbb R^{d(n+1)} \to \mbb R $ be a block-multilinear polynomial of degree $d$ with $|h(y)|\leq 1$ for any $y\in \BCF^{d(n+1)}$.
	Then, $h(y)$ can be approximated within precision $\pm \epsilon$ with high probability, by querying $O((\frac{n}{\epsilon^2})^{1-1/d}))$ variables
	(with a big-$O$ constant that is allowed to depend on $d$).
\end{theorem}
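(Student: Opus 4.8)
The plan is to prove Theorem~\ref{thm:aa} by exhibiting an explicit randomized query algorithm and analyzing it by induction on the degree $d$. Fix the hidden input $y^\star\in\BCF^{d(n+1)}$ whose coordinates the algorithm may query, and let $B_1,\dots,B_d$ denote the $d$ blocks; the target is that an estimate of $h(y^\star)$ accurate to $\pm\eps$ with probability $\ge 2/3$ can be produced with $O\bigl((n/\eps^2)^{1-1/d}\bigr)$ queries, the hidden constant depending on $d$ only.

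For the base case $d=1$ we have $h(y)=\sum_{i\in B_1}a_iy_i$, and $|h|\le 1$ on $\BCF^{n+1}$ forces $\sum_i|a_i|\le 1$; so we importance-sample $i$ with probability $|a_i|/\sum_j|a_j|$, query $y^\star_i$, and output $\bigl(\sum_j|a_j|\bigr)\operatorname{sign}(a_i)\,y^\star_i$, an unbiased estimator of $h(y^\star)$ of absolute value at most $1$ and hence of variance at most $1$; averaging $O(1/\eps^2)$ independent copies suffices. For the inductive step, write $h(y)=\sum_{i\in B_1}y_i\,h_i\bigl(y_{B_2},\dots,y_{B_d}\bigr)$ with each $h_i$ block-multilinear of degree $d-1$, and note that taking the worst sign pattern on $B_1$ shows $\sum_i|h_i(z^\star)|\le 1$, where $z^\star$ is the restriction of $y^\star$ to $B_2,\dots,B_d$. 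The algorithm then has two nested layers: an outer loop that (after a short exploration phase inside $B_1$ costing about $n^{1-1/d}$ queries) samples $i$ from a distribution concentrated on the coordinates carrying most of the $\ell_1$-mass $\sum_i|h_i(z^\star)|$, and, for each sampled $i$, an inner recursive call to the degree-$(d-1)$ algorithm estimating $h_i(z^\star)$ to precision $\tau$; one averages $M$ samples. The parameters are coupled — $M$ is roughly the estimator's variance over $\eps^2$, $\tau$ must be small enough that the recursion bias times the reweighting factor stays below $\eps$, and the exploration budget must be large enough to drive the estimator's variance down to $O(1)$ — and solving the resulting recurrence $W_d(\eps)\le\tilde O\bigl(n^{1-1/d}+M\cdot W_{d-1}(\tau)\bigr)$ gives $W_d(\eps)=O\bigl((n/\eps^2)^{1-1/d}\bigr)$.

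The main obstacle is precisely the worst-case variance control in the inductive step. The naive uniform sample $n\,y_i\,h_i(z^\star)$ is unbiased but, with only $\sum_i h_i(z^\star)^2\le 1$ at hand, has variance $\Theta(n)$, giving $O(n/\eps^2)$ queries and no speedup; and the obvious query-free importance weights, the $L_2$-masses $\|h_i\|_2^2$, can be wildly off from the pointwise masses $|h_i(z^\star)|$, since a unit-$L_2$ degree-$(d-1)$ block-multilinear polynomial can take values as large as $n^{(d-1)/2}$, which makes that estimator even worse. So the crux — the part of the \cite{AA} argument one would have to reproduce in full — is an adaptive scheme that uses its first $\approx n^{1-1/d}$ queries to learn enough about $z^\star$ to re-weight $B_1$ so that the estimator is simultaneously (nearly) unbiased and of variance $O(1)$, together with the bookkeeping that makes the $d$-fold recursion, and the accumulation of recursion errors over its $d$ levels, close at $n^{1-1/d}$.
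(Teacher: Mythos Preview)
The paper does not prove Theorem~\ref{thm:aa}; it is quoted verbatim from \cite{AA} and used as a black box. So there is no ``paper's own proof'' to compare against here.

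As a standalone argument, your proposal is not a proof but a plan with an explicitly acknowledged hole. Your base case is fine, and your recursive decomposition $h=\sum_{i\in B_1}y_ih_i$ together with the observation $\sum_i|h_i(z^\star)|\le 1$ is the right starting point. But the entire content of the theorem lives in what you call ``the crux'': constructing, with only $\approx n^{1-1/d}$ queries, a sampling distribution on $B_1$ whose induced estimator has $O(1)$ variance. You correctly diagnose why uniform sampling and $L_2$-mass weighting both fail, and then say this is ``the part of the \cite{AA} argument one would have to reproduce in full.'' That sentence is an admission that the proof is missing, not a proof step. Without that component, the recurrence $W_d(\eps)\le \tilde O\bigl(n^{1-1/d}+M\cdot W_{d-1}(\tau)\bigr)$ is unsupported: you have no bound on $M$ better than $\Theta(n/\eps^2)$, and the recursion collapses to the trivial $O(n)$ bound.

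To turn this into an actual proof you would need to supply the adaptive exploration/reweighting mechanism and verify that (i) the exploration phase stays within the $n^{1-1/d}$ budget, (ii) the resulting estimator has variance $O_d(1)$, and (iii) the bias introduced by recursive estimation errors, amplified by the reweighting factors, remains controlled across all $d$ levels. None of these is routine, and together they are the theorem.
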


It has been open whether a similar theorem holds for general (not block-multilinear) polynomials $h(x_1, \ldots, x_n)$. 
Aaronson and Ambainis \cite{AA} showed 
that this is true for degree 2 (using quite sophisticated tools from Fourier analysis)
but left it as an open problem for higher degrees. With Theorem \ref{thm:dz}, we can immediately resolve this problem.
\begin{corollary}
	Let $ g : \mbb R^{n} \to \mbb R $ be a polynomial of degree $d$ with $|g(y)|\leq 1$ for any $y\in \BCF^{ n}$.
	Then, $g(y)$ can be approximated within precision $\pm \epsilon$ with high probability, by querying $O((\frac{n}{\epsilon^2})^{1-1/d}))$ variables
	(with a big-$O$ constant that is allowed to depend on $d$).
\end{corollary}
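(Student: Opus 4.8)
The plan is to reduce the corollary directly to Theorem~\ref{thm:aa} via the decoupling result of Theorem~\ref{thm:dz}. Given a bounded degree-$d$ polynomial $g:\mbb R^n\to\mbb R$ with $|g(y)|\le 1$ on $\BCF^n$, I would first apply Theorem~\ref{thm:dz} to obtain a block-multilinear polynomial $\tilde g:\mbb R^{(n+1)d}\to\mbb R$ with $\tilde g(\tilde x,\ldots,\tilde x)=g(x)$ for all $x\in\BCF^n$ and $|\tilde g(z)|\le C_d$ for all $z\in\BCF^{(n+1)d}$, where $C_d$ is a constant depending only on $d$. Then $h:=\tilde g/C_d$ is a block-multilinear polynomial of degree $d$ satisfying $|h(z)|\le 1$ on $\BCF^{(n+1)d}$, so Theorem~\ref{thm:aa} applies to $h$.

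Next I would run the sampling algorithm of Theorem~\ref{thm:aa} on $h$ with target precision $\epsilon/C_d$: this approximates $h(\tilde x,\ldots,\tilde x)$ within $\pm\epsilon/C_d$ with high probability by querying $O\!\big((\tfrac{n}{(\epsilon/C_d)^2})^{1-1/d}\big) = O\!\big((\tfrac{n}{\epsilon^2})^{1-1/d}\big)$ variables, where the big-$O$ constant absorbs the $C_d^{2(1-1/d)}$ factor and hence depends only on $d$. Multiplying the resulting estimate by $C_d$ yields an approximation of $g(x)=C_d\,h(\tilde x,\ldots,\tilde x)$ within $\pm\epsilon$. One small bookkeeping point: the queries made by the algorithm for $h$ are to the variables of the blocks $z^{(1)},\ldots,z^{(d)}$; since we evaluate at $(\tilde x,\ldots,\tilde x)$ with $\tilde x=(1,x_1,\ldots,x_n)$, a query to the $i$-th coordinate of any block is either trivially known (the constant-$1$ coordinate) or is answered by querying the corresponding $x_i$, so the number of queries to the actual input $x_1,\ldots,x_n$ is no larger than the number of variable-queries made to $h$.

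Honestly, there is no real obstacle here — the work has all been done in Theorem~\ref{thm:dz} and Theorem~\ref{thm:aa}, and the corollary is just the composition of the two, with the only care needed being to track how the constant $C_d$ propagates through the precision parameter into the query bound (it enters polynomially in $d$-dependent fashion, so it is harmless). If one wanted the cleanest possible constants one could invoke the $C_2\le 3$ and $C_d=O(3.5911\ldots^d)$ bounds from Appendix~\ref{sec:higher}, but for the asymptotic statement as written this is unnecessary.
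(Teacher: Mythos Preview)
Your proposal is correct and follows essentially the same route as the paper: apply Theorem~\ref{thm:dz} to pass to a bounded block-multilinear polynomial, then invoke Theorem~\ref{thm:aa} with precision $\epsilon/C_d$ and absorb $C_d$ into the big-$O$ constant. Your additional bookkeeping remarks (the explicit normalization $h=\tilde g/C_d$ and the observation that queries to the diagonal $(\tilde x,\ldots,\tilde x)$ translate to at most as many queries to $x$) are sound refinements that the paper leaves implicit.
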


\begin{proof}
	We apply Theorem \ref{thm:dz} 
	to construct a corresponding block-multilinear polynomial $h$ and then use Theorem \ref{thm:aa} to estimate
	$h$ with precision $\frac{\epsilon}{C_d}$. Since $C_d$ is a constant for any fixed $d$, we can absorb it into the big-$O$ constant.
\end{proof}

This result was independently shown by O'Donnell and Zhao \cite{DZ} (using their form of Theorem \ref{thm:dz}).

\section{Conclusions}

We have shown a new equivalence between quantum algorithms and polynomials: the existence of a 1-query quantum algorithm computing a partial Boolean function $f$ is  equivalent to the existence of a degree-2 polynomial $p$ that approximates $f$. \ Our equivalence theorem can be seen as a counterpart of the equivalence between unbounded-error quantum algorithms and threshold polynomials, proved by Montanaro et al.\ \cite{MNR}, and the equivalence between nondeterministic quantum algorithms and nondeterministic polynomials, proved by de Wolf \cite{deWolf}.

Our equivalence is, however, much more challenging to prove. A transformation from polynomials to unbounded-error or nondeterministic quantum algorithms can incur a very large loss in error probability (for example, it can transform a polynomial $p$ with error $1/3$ to a quantum algorithm $\A$ with the probability of correct answer 
$\frac{1}{2}+\frac{1}{2^n}$). In contrast, our transformation produces a quantum algorithm 
whose error probability only depends on the approximation error of the polynomial $p$ and not on the number of variables $n$.  
To achieve this, we use a relation between two matrix norms related to Grothendieck's inequality.

Our equivalence holds for two notions of approximability by a polynomial: the standard one \cite{NS} which allows arbitrary polynomials of degree 2 and the approximation by block-multilinear polynomials recently introduced by \cite{AA}. The first notion of approximability is known not to be equivalent to the existence of a quantum algorithm: there are several constructions of $f$ for which $Q_{\epsilon}(f)$ is asymptotically larger than $\deg(f)$ \cite{A03,ABK}, with 
$Q_{\epsilon}(f)=\tilde{\Omega}(\deg^2(f))$ as the biggest currently known gap \cite{ABK}. We have shown that a similar gap holds for the second notion of approximability. Thus, neither of the two notions is equivalent to the existence of a quantum algorithm in the general case.

Three open problems are:
\begin{enumerate}
\item
{\bf Equivalence between quantum algorithms and polynomials for more than 1 query?}

Is it true that quantum algorithms with 2 queries are equivalent to polynomials of degree 4?
It is even possible that quantum algorithms with $k$ queries are equivalent to polynomials of degree $2k$ for any constant $k$ - as long as the relation between the error of quantum algorithm and the error of the polynomial approximation depends on $k$, as discussed in section \ref{sec:more}.
\item
{\bf From polynomials to quantum algorithms.}

It would also be interesting to have more results about transforming polynomials into quantum algorithms, even
if such results fell short of a full equivalence between the two notions. For example, if it was possible
to transform polynomials of degree 3 into 2 query quantum algorithms this would be an interesting result, even
though it would be short of being an equivalence (since 2 query quantum algorithms are transformable into polynomials
of degree 4 and not 3).
\item
{\bf Other notions of approximability by polynomials?}

Until this work, there was a hope that the block-multilinear polynomial degree $\bmdega(f)$
may provide a quite tight characterization of the quantum query complexity $Q(f)$.
Now, we know that the gap between $\bmdeg(f)$ and $Q(f)$ can be as large as the best known
gap between $\deg(f)$ and $Q(f)$. Can one come up with a different notion of polynomial degree that
would be closer to $Q(f)$ than $\deg(f)$ or $\bmdeg(f)$?
\item
{\bf Are $deg$ and $bmdeg$ equivalent?}

For all functions $f$ that we have checked, $\dega_{\epsilon}(f)=2$ implies $\bmdega_{\epsilon}(f)=2$ with the same $\epsilon>0$. Can one prove this? More generally, does 
$\dega_{\epsilon}(f)=k$ implies $\bmdega_{\epsilon}(f)=k$ with the same $\epsilon>0$ or is there a function that can be approximated by a general polynomial of degree $k$ but not by a block-multilinear polynomial
of the same degree? As mentioned in section \ref{sec:more}, the function $f$ of \cite{ABK} with $Q(f)=\tilde{\Omega}(\dega(f))^4)$ is one candidate for separating these two notions.
\end{enumerate}

\begin{appendix}

\newpage
{\Huge Appendix}

\section{Proof of Lemma \ref{conj:2}}\label{sec:prf_ratio}
\subsection{Additional Notation}

The variables of the polynomial (\ref{eq:d2}) correspond to rows and columns of the coefficient matrix $A=(a_{ij})$, $ i \in [n] $, $ j \in [m] $.
Hence, we can reword variable-splitting in terms of rows and columns of $A$, introducing the operations of {\em row-splitting} and {\em column-splitting}.

Let $ a_{i \cdot} $ stand for the $ i $th row $ (a_{i1} , \ \ldots, \ a_{im} ) $ of $A$ and similarly $ a_{\cdot j} $ stand for the $ j $th column of $ A $.
Row-splitting (into $ k $ rows)  takes a row $a_{i \cdot}$ and replaces it with $ k $ equal rows
$a_{i \cdot}/k = (a_{i1}/k, \ldots, a_{im}/k)$.
Similarly, column-splitting  takes a column $a_{\cdot j}$ and replaces it with $ k $  equal columns  $a_{\cdot j}/ k$.

We also denote
\[
\nrm{A}_G=\sup_{k \in \mbb N}  \sup_{\substack{p_i,q_j \in \mbb R^k  \\ \forall i: \nrm{p_i}=1 \\ \forall j: \nrm{q_j}=1}}{\sum_{i,j}{a_{ij}\lra{p_i,q_j}}}.
\]
Notice that $ \nrm \cdot _G$ is a norm (and, in fact, it is the dual norm of the factorization norm $ \gamma_2 $, see, e.g., \cite{RSA:RSA20232}).

Let $  \lmax(B)$ denote the maximal eigenvalue of a square matrix $ B $; then
\begin{equation}\label{eq:max_s_val}
\nrm A ^2= \lmax \lr{A^TA}  = \lmax\lr{AA^T}.
\end{equation}

Denote
$ g(A) = \nrm A \,  \sqrt{nm}  / \nrm A_{\infty \to 1} $.
By $ \Gamma(A) $ we denote the numerator   $ \nrm A \,  \sqrt{nm}  $.

We say that a matrix $ A' $ of size $ n' \times m'$ can be obtained from $ A $ if there exists a sequence of row and column splittings
that transforms $ A $ to the matrix $ A' $; if $ A' $ can be obtained from $ A $, we denote it by $ A \obtain A' $.  Moreover, for simplicity we assume that no row or column is split  repeatedly, i.e., if a row $ a_{i \cdot} $ is split into $ k $ rows $a_{i \cdot}/k  $, then none of these obtained rows is split again.

By $ G(A) $ we denote the infimum of $ g(A') $ over all matrices $ A' $ which can be obtained from $ A $:
\[
G(A ) :=  \inf_{A' : A \obtain A'}  g(A').
\]
We have $ g(A)  \geq 1$ for all matrices $ A $. (To see this, we observe that 
$\frac{\| Ax\|_1}{\|x\|_{\infty}} \leq \frac{\sqrt{n} \|Ax\|_2}{\|x\|_2/\sqrt{m}} = \sqrt{nm} \frac{\|Ax\|_2}{\|x\|_2}$.
Taking maximums over all $x$ on both sides gives $\|A\|_{\infty\rightarrow 1} \leq \sqrt{nm} \|A\|$ which is equivalent to $g(A)\geq 1$.)
Therefore, we also have $ G(A)  \geq  1 $.

It is possible to show that the assumption that no row or column is split  repeatedly does not alter the value of this infimum; more generally, one could consider weighted splitting of rows (or columns), e.g., allowing to replace a row $ a_{i \cdot} $ with $ k $ rows $ w_j a_{i \cdot} $, $ j \in [k] $, where $ w_j $ are non-negative weights  satisfying $ w_1 +\ldots + w_k  = 1 $. Also in this case it is possible to show that the infimum of $ g(A') $ over all matrices $ A' $, yielded by  permitted splittings, has the same value as $ G(A)  $.

Let $ \mc A $ denote the class of all matrices (with real entries) which do not contain zero rows or columns. Notice that if $ A \in \mc A $ and $ A \obtain A' $, then also $ A' \in \mc A $. The class $ \mc A_{n,m}  $ contains all matrices in $ \mc A $ of size $ n \times m $.

By $\nonzVec{n} $ we denote the set of all vectors $ w \in \mbb R^n $ such that $ w_i > 0 $ for all $ i \in [n] $.

Using the introduced notation, we can restate  Lemma \ref{conj:2}:

\begingroup
\def\thelemmaPrime{\ref*{conj:2}'}
\begin{lemmaPrime}\label{th:claim0}
	For every matrix $ A $ we have
	\begin{equation}\label{eq:main}
	G(A)   =  \frac{ \nrm A_G }{\nrm A_{\infty \to 1}}  \leq K.
	\end{equation}
\end{lemmaPrime}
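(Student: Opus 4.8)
The plan is to establish the two parts of \eqref{eq:main} separately. The equality $G(A) = \nrm A_G / \nrm A_{\infty \to 1}$ is the structural heart of the lemma, and the inequality $\nrm A_G / \nrm A_{\infty \to 1} \le K$ will then follow almost directly from the definition of Grothendieck's constant. For the inequality: by homogeneity assume $\nrm A_{\infty \to 1} \le 1$, which is exactly the hypothesis that $\sum_{i,j} a_{ij} x_i y_j \le 1$ for all sign vectors $x_i, y_j \in \{-1,1\}$; then the defining property of $K$ gives $\sum_{i,j} a_{ij} \lra{u_i, v_j} \le K$ for all unit vectors $u_i, v_j$, i.e.\ $\nrm A_G \le K$. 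So the real work is the equality.

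**For the equality $G(A) = \nrm A_G / \nrm A_{\infty \to 1}$**, I would prove the two inequalities in turn. For $G(A) \ge \nrm A_G / \nrm A_{\infty \to 1}$: one shows that for every $A'$ obtained from $A$ by row/column splittings, $g(A') \ge \nrm A_G / \nrm A_{\infty \to 1}$. The point is that splitting preserves the quantity $\nrm{A'}_G / \nrm{A'}_{\infty \to 1}$ (splitting a row $a_{i\cdot}$ into $k$ copies of $a_{i\cdot}/k$ corresponds, on the $\nrm\cdot_G$ side, to replacing the single unit vector $p_i$ by $k$ unit vectors that can all be taken equal to an optimal one, and the contribution $a_{i\cdot}/k$ summed $k$ times is again $a_{i\cdot}$; the same argument works for $\nrm\cdot_{\infty\to1}$ since both norms are "insensitive" to splitting in this precise sense), while $\Gamma(A') = \sqrt{n'm'}\,\nrm{A'}$ can only grow — indeed one needs the key estimate that $\sqrt{n'm'}\,\nrm{A'} \ge \nrm{A'}_G$, which is the matrix analog of the observation already made in the excerpt that $g(A) \ge 1$, pushed through with vector-valued rather than scalar test inputs. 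Combining: $g(A') = \Gamma(A')/\nrm{A'}_{\infty\to1} \ge \nrm{A'}_G/\nrm{A'}_{\infty\to1} = \nrm A_G/\nrm A_{\infty\to1}$, and taking the infimum over $A'$ gives the bound.

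**For the reverse inequality $G(A) \le \nrm A_G / \nrm A_{\infty \to 1}$**, which I expect to be the main obstacle, the idea is that by splitting rows and columns into many pieces with (approximately) the right weights, one can force $\sqrt{n'm'}\,\nrm{A'}$ down to essentially $\nrm{A'}_G = \nrm A_G$. Concretely, in the optimal configuration achieving $\nrm A_G$ there are unit vectors $p_i, q_j$; if instead of unit weight on row $i$ we distribute weight proportionally to the "mass" that row carries in the optimal Grothendieck configuration, then after splitting row $i$ into $n_i$ copies and column $j$ into $m_j$ copies (with $\sum_i n_i = n'$, $\sum_j m_j = m'$ chosen to approximate a prescribed probability vector $w \in \nonzVec n$), the rescaled matrix $A'$ has operator norm controlled by a weighted quantity that, optimized over the weights $w$, matches $\nrm A_G / \sqrt{n'm'}$ in the limit. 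This is where one must be careful: I would set up an auxiliary optimization — minimize $\sqrt{n'm'}\,\nrm{A'}$ over weighted splittings — show it equals $\inf_{w \in \nonzVec n, w' \in \nonzVec m} \nrm{D_w^{1/2} A D_{w'}^{1/2}}$-type expression after normalization, and then identify this infimum with $\nrm A_G$ via an SDP/duality argument (the dual of the semidefinite program computing $\nrm A_G$ is exactly of this weighted-operator-norm form — this is the standard $\gamma_2^* $ duality). The remark in the excerpt that unrestricted/weighted splittings give the same value as $G(A)$ is exactly the technical bridge needed to pass from "weighted splittings" back to honest integer splittings, up to the $\delta$ slack in the original Lemma~\ref{conj:2} statement. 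The delicate points are (a) verifying that the weighted-splitting infimum genuinely equals the integer-splitting infimum (an approximation-by-rationals argument), and (b) correctly matching the weighted operator-norm minimization with the semidefinite dual characterization of $\nrm A_G$; everything else is bookkeeping with the three norms involved.
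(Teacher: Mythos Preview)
Your proposal is correct and follows essentially the same route as the paper. Both arguments (i) show that splitting preserves $\nrm\cdot_G$ and $\nrm\cdot_{\infty\to1}$, (ii) recast the infimum over integer splittings as an infimum of a weighted operator-norm expression $F_A(w,v)=\nrm{D_w^{-1}AD_v^{-1}}\,\nrm w\,\nrm v$ via a rational-approximation bridge, and (iii) identify that infimum with $\nrm A_G$ using the factorization characterization of $\nrm\cdot_G$ (the paper cites this as Pisier's lemma; you call it the $\gamma_2^*$ duality---same thing). Your lower bound $\sqrt{n'm'}\,\nrm{A'}\ge\nrm{A'}_G$ is exactly the vector-valued upgrade of the $g(A)\ge1$ computation and gives the direction $G(A)\ge\nrm A_G/\nrm A_{\infty\to1}$ cleanly. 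The only omissions are cosmetic: the paper parametrizes the weighted minimization with $D_w^{-1}$ rather than $D_w^{1/2}$ (a reparametrization), reduces to square matrices before invoking the factorization lemma, and handles matrices with zero rows/columns as a separate boundary case.
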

\endgroup
The inequality here is due to Grothendieck's inequality, see, e.g., Theorem 4 of \cite{RSA:RSA20232}. The remaining part of this section is devoted to proving the equality in \eqref{eq:main}.

\subsection{Splitting preserves the infinity-to-one norm}
Here we show that splitting rows or columns does not change the norms $ \nrm{\cdot}_{\infty \to 1} $ and $ \nrm{\cdot}_{G} $.

\begin{lemma}\label{th:claim6}
	For every matrix $ A \in \mc A$ and every $ A'  $ s.t. $ A  \obtain A' $ we have
	\[
	\nrm{A}_{\infty \to 1}
	=
	\nrm{A'}_{\infty \to 1}
	\quad
	\text{and}
	\quad
	\nrm{A}_{G}
	=
	\nrm{A'}_{G} .
	\]
\end{lemma}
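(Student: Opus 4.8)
The plan is to reduce the statement to a single splitting step and then use a short averaging identity, together with the fact that both $\nrm{\cdot}_{\infty\to1}$ and $\nrm{\cdot}_{G}$ are suprema of a multilinear objective over products of balls, so the norm constraints may be relaxed to inequalities without changing the value. First I would reduce to one splitting: since $A\obtain A'$ means $A'$ is produced from $A$ by a finite sequence of row and column splittings, and the relation ``has the same $\nrm{\cdot}_{\infty\to1}$ and the same $\nrm{\cdot}_{G}$ as'' is transitive, it suffices by induction on the length of that sequence to treat a single splitting. By transposing I may take it to be a row splitting (indeed $\nrm{B^{T}}_{\infty\to1}=\nrm{B}_{\infty\to1}$, since $\nrm{B}_{\infty\to1}=\max\{|u^{T}Bv| : u\in[-1,1]^{a},\ v\in[-1,1]^{b}\}$ for an $a\times b$ matrix $B$, and $\nrm{B^{T}}_{G}=\nrm{B}_{G}$, since the definition of $\nrm{\cdot}_{G}$ is symmetric in the two families of unit vectors). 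So let $A'$ arise from $A$ by replacing the row $a_{i\cdot}$ with $r$ copies of $\tfrac1r a_{i\cdot}$, and index the rows of $A'$ by $([n]\setminus\{i\})\cup\{i_{1},\dots,i_{r}\}$.

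For $\nrm{\cdot}_{\infty\to1}$ I would show the two bilinear forms have the same set of attainable values over their cubes. Given $x'\in[-1,1]^{n'}$ and $y\in[-1,1]^{m}$, set $x_{\ell}=x'_{\ell}$ for $\ell\neq i$ and $x_{i}=\tfrac1r\sum_{t=1}^{r}x'_{i_{t}}\in[-1,1]$; the split rows then contribute $\sum_{j}\sum_{t}\tfrac{a_{ij}}{r}x'_{i_{t}}y_{j}=\sum_{j}a_{ij}x_{i}y_{j}$, so $(x')^{T}A'y=x^{T}Ay$. Conversely, from $x\in[-1,1]^{n}$ one recovers the same value by taking $x'_{i_{t}}=x_{i}$ for every $t$. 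Hence $\{\,(x')^{T}A'y\,\}=\{\,x^{T}Ay\,\}$ as the variables range over the respective cubes, and taking the maximum of the absolute value on each side gives $\nrm{A'}_{\infty\to1}=\nrm{A}_{\infty\to1}$.

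For $\nrm{\cdot}_{G}$ I would run the same averaging, the only new point being that an average of unit vectors has norm at most $1$, not exactly $1$. So I would first record the relaxation fact: in the definition of $\nrm{B}_{G}$ the constraints $\nrm{p_{i}}=1$, $\nrm{q_{j}}=1$ may be replaced by $\nrm{p_{i}}\le1$, $\nrm{q_{j}}\le1$ without changing the supremum, since with all vectors but one fixed the objective is of the form $p_{i}\mapsto\lra{p_{i},\ \sum_{j}a_{ij}q_{j}}$, whose maximum over the unit ball is attained on the unit sphere; adjusting one vector at a time turns any ``ball'' configuration into a ``sphere'' configuration of no smaller value. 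Then, mapping a unit configuration $(p'_{\bullet},q_{\bullet})$ for $A'$ to $(p_{\bullet},q_{\bullet})$ with $p_{\ell}=p'_{\ell}$ for $\ell\neq i$ and $p_{i}=\tfrac1r\sum_{t}p'_{i_{t}}$ (of norm at most $1$) preserves the objective value, giving $\nrm{A'}_{G}\le\nrm{A}_{G}$ via the relaxed definition; conversely, sending a unit configuration for $A$ to the configuration for $A'$ with $p'_{i_{t}}=p_{i}$ for all $t$ (still unit vectors) preserves the value and gives $\nrm{A'}_{G}\ge\nrm{A}_{G}$.

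I expect the only steps needing genuine care to be the relaxation fact for $\nrm{\cdot}_{G}$ and the vertex reduction for $\nrm{\cdot}_{\infty\to1}$; the rest is index bookkeeping (absorbed into the induction) and the one-line averaging identity, so no real obstacle is anticipated. Finally, note that the hypothesis $A\in\mc A$ is preserved by splitting but is not actually used in this argument — it is carried along only because $G(\cdot)$ is defined on $\mc A$.
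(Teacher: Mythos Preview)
Your proposal is correct and is essentially the same argument as the paper's: reduce to a single row split, then use the averaging identity (replicate in one direction, average in the other) together with the relaxation of the unit-norm constraint in the definition of $\nrm{\cdot}_{G}$. Your presentation is in fact a bit tighter — you make the reduction to row splits via transposition explicit and observe that the two bilinear forms have identical ranges over the cubes rather than proving two separate inequalities — and your remark that $A\in\mc A$ is not actually used in this lemma is accurate.
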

\begin{proof}
	Let a matrix $ A \in \mc A_{n,m}$  be fixed. It is sufficient to show the statement for matrices $ A' $ that can be obtained by  splitting a row $ a_{i \cdot} $ of $ A $ into $ l+1 $  rows $ a_{i \cdot}/(l+1)$ (these rows are indexed by $ i $, \ldots, $ i+l $ in $ A' $).
	
	Then
	\[
	\nrm{A}_{\infty \to 1}
	=
	\max_{   x : \nrm {x}_{\infty} \leq 1  	}  \nrm{Ax}_1
	=
	\max_{   x \in \BCF^n   	}  \nrm{Ax}_1
	=
	\max_{
		\substack{
			x \in \BCF^n  \\ y \in \BCF^m }
	}x^T A y.
	\]
	Suppose that $ x \in \BCF^n , y \in \BCF^m $ are such that  $ x^T A y  = \nrm{A}_{\infty \to 1} $. Notice that
	\[
	x^T A y
	=
	\sum_{k=1}^{n} x_k a_{k \cdot} y.
	\]

	Let $ x' \in \BCF^{n+l}$ be obtained from $ x $ by replacing $ x_i $ with $ (x_i, x_i, \ldots, x_i) $ (i.e., the component $ x_i $, corresponding to the split row $ a_{i \cdot} $, is replicated $ l+1 $ times) and these components are indexed with $ i $, \ldots, $ i+l$ in $ x' $. Then
	\[
	(x')^T A' y
	=
	\sum_{k=1}^{n+l} x_k' a'_{k \cdot} y
	=
	(l+1) \cdot  x_i \frac{a_{i \cdot }}{l+1}y
	+
	\sum_{k \neq i}  x_k a_{k \cdot} y
	=
	\sum_{k=1}^{n} x_k a_{k \cdot} y
	=
	\nrm{A}_{\infty \to 1}.
	\]
	This shows that
	\[
	\nrm{A'}_{\infty \to 1} \geq \nrm{A}_{\infty \to 1}.
	\]

	Suppose that $ x \in \BCF^{n+l} , y \in \BCF^m $ are such that
	\[
	x^T A' y = \nrm{A'}_{\infty \to 1}
	\]
	and the rows $ a'_{i' \cdot} $, $ i' \in [i \tdots i+l] $, are the rows $  a_{i \cdot} / (l+1)$, obtained from $ a_{i \cdot} $.

	Let $ \tilde x \in \mbb R^n $ be such that
	\[
	\tilde x_k =
	\begin{cases}
	x_k  & k =1 ,2, \ldots, i-1 , \\
	x_{k+l}  & k =i+1 ,i+2, \ldots,n, \\
	 \frac{x_{i} +  \ldots + x_{i+l}}{l+1}, & k =i.
	\end{cases}
	\]
	Notice that
	\[
	\lrv{
		\tilde	x_i
	}
	\leq
	\frac{1}{l+1}
	\sum_{k=i}^{i+l}
	\lrv{ x_{k} }
	=1.
	\]
	Thus $ \nrm {\tilde x}_{\infty } \leq  1 $. On the other hand,
	\[
	\tilde x^T A y
	=
	\sum_{k=1}^{n} \tilde  x_k a_{k \cdot} y
	=
	\frac{\sum\limits_{k \in [i \tdots i+l]}   x_k}{l+1}    a_{i \cdot } y
	+
	\sum_{k=1}^{i-1}  x_k a_{k \cdot} y
	+
	\sum_{k=i+1}^{n}  x_{k+l} a_{k \cdot} y
	=
	\sum_{k=1}^{n+l}  x_k a'_{k \cdot} y
	= \nrm{A'}_{\infty \to 1} .
	\]
	Since
	\[
	\nrm{A }_{\infty \to 1} = \sup_{
		\substack{
			x \in \mbb R^n, y \in x \in \mbb R^m ,\\
			\nrm x_\infty \leq 1,\\
			\nrm y_\infty \leq 1
		}
	} x^T A y,
	\]
	this implies that
	\[
	\nrm{A }_{\infty \to 1}  \geq   \nrm{A'}_{\infty \to 1} .
	\]
	Hence the two norms are equal.

	Consider the norm
	\[
	\nrm{A}_G=\sup_{r \in \mbb N}  \sup_{\substack{p_k,q_j \in \mbb R^r  \\ \forall k: \nrm{p_k}=1 \\ \forall j: \nrm{q_j}=1}}{\sum_{k,j}{a_{kj}\lra{p_k,q_j}}}.
	\]
	Let  unit vectors $ p_k $, $ q_j $ (in $ \mbb R^r $ for some $ r \in \mbb N $) be fixed, $ k\in [n] $, $ j \in [m] $. Choose $ n+l $ unit vectors as follows:
	\[
	p'_k =
	\begin{cases}
	p_k, & k < i, \\
	p_{k-l} , & k = i+l+1 , \ldots, n+l, \\
	p_i, & k \in [i \tdots i+l].
	\end{cases}
	\]
	Then
	\[
	\nrm{A'}_G
	\geq
	\sum_{k,j}{a'_{kj}\lra{p'_k,q_j}}
	=
	\sum_{k,j}{a_{kj}\lra{p_k,q_j}}.
	\]
	Taking the supremum over all $ r $ and unit vectors $ p_k $, $ q_j $, we obtain
	\[
	\nrm{A'}_G
	\geq
	\nrm{A }_G.
	\]

	Let  unit vectors $ p_k $, $ q_j $ (in $ \mbb R^r $ for some $ r \in \mbb N $) be fixed, $ k\in [n+l] $, $ j \in [m] $.

	Choose $ n $ unit vectors as follows:
	\[
	\tilde p_k =
	\begin{cases}
	p_k, & k < i, \\
	p_{k+l} , & k = i+1 , \ldots, n , \\
	\frac{p_i + \ldots + p_{i+l}}{l+1} , & k =i.
	\end{cases}
	\]
	By the triangle inequality
	\[
	\nrm{ \tilde p_i} \leq \frac{\nrm {p_i} + \ldots + \nrm {p_{i+l}}}{l+1} = 1.
	 \]
	Since
	\[
	\nrm{A}_G
	=\sup_{r  \in \mbb N}  \sup_{\substack{p_k,q_j \in \mbb R^r  \\ \forall k: \nrm{p_k}=1 \\ \forall j: \nrm{q_j}=1}}{\sum_{k,j}{a_{kj}\lra{p_k,q_j}}}
	=\sup_{r \in \mbb N}  \sup_{\substack{p_k,q_j \in \mbb R^r  \\ \forall k: \nrm{p_k} \leq 1 \\ \forall j: \nrm{q_j}\leq 1}}{\sum_{k,j}{a_{kj}\lra{p_k,q_j}}},
	\]
	we have
	\[
	\sum_{k} \sum_j {a _{kj}\lra{\tilde p_k,q_j}}
	\leq
	\nrm{A}_G.
	\]

	It follows that
	\[
	\sum_{k,j}{a'_{kj}\lra{p_k,q_j}}
	=
	\sum_{k\notin [i \tdots i +l]} \sum_j{a'_{kj}\lra{p_k,q_j}}
	+
	\frac{1}{l+1}
	\sum_{k=i}^{i+l}
	\sum_j{a_{ij}\lra{p_{k},q_j}}
	=
	\sum_{k} \sum_j {a _{kj}\lra{\tilde p_k,q_j}}
	\leq
	\nrm{A}_G.
	 \]
	Taking the supremum over all $ r $ and $ p_k $, $ q_j $,  we obtain
	\[
	\nrm{A }_G
	\geq
	\nrm{A'}_G.
	\]
	Hence the two norms are equal.

\end{proof}

\subsection{Characterization of row(column)-splitting}\label{sec:4}

\begin{lemma}\label{th:claim1}
	Suppose that
	$ A \in  \mc A_{n,m} $;
	for each $ i \in [n] $ the row  $a_{i \cdot}$  is split into $ k_i $ rows
	and
	for each $j \in [m] $ the column $ a_{\cdot j} $ is split into $ l_j $ rows; the resulting matrix is denoted by $ A' $.
	
	Then $ \Gamma(A') = \nrm{\tilde A} \nrm w \nrm v $, where $\tilde{A}=(\tilde{a}_{ij})$,
	\begin{align*}
	& \tilde a_{ij} = \frac{a_{ij}}{w_i v_j}, \ i \in [n], \ j \in [m],\\
	& w_i = \sqrt {k_i}, \quad v_j  = \sqrt{l_j}.
	\end{align*}
\end{lemma}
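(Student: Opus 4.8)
The plan is to compute $\Gamma(A') = \|A'\|\sqrt{n'm'}$ directly, where $n' = \sum_i k_i$ and $m' = \sum_j l_j$, by identifying $\|A'\|$ (the operator norm, i.e.\ the largest singular value) in terms of the rescaled matrix $\tilde A$. First I would note that $\sqrt{n'm'} = \sqrt{\sum_i k_i}\,\sqrt{\sum_j l_j} = \|w\|\,\|v\|$, since $\|w\|^2 = \sum_i w_i^2 = \sum_i k_i$ and likewise for $v$; this already accounts for the two norm factors $\|w\|$ and $\|v\|$ in the claimed formula. So the real content is the identity $\|A'\| = \|\tilde A\|$.

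To see this, observe that $A'$ is obtained from $A$ by replacing the $(i,j)$ entry $a_{ij}$ with a $k_i \times l_j$ block all of whose entries equal $a_{ij}/(k_i l_j)$ (row-split row $i$ into $k_i$ copies of $a_{i\cdot}/k_i$, then column-split column $j$ into $l_j$ copies, dividing by $l_j$). A rank-one $k_i \times l_j$ all-ones-type block $\tfrac{a_{ij}}{k_i l_j}\mathbf 1_{k_i}\mathbf 1_{l_j}^T$ can be written as $a_{ij}\cdot \tfrac{\mathbf 1_{k_i}}{k_i}\cdot \tfrac{\mathbf 1_{l_j}^T}{l_j}$. Collecting these blocks, I would write $A' = P A Q^T$ where $P$ is the $n' \times n$ matrix whose $i$th column is $\tfrac{1}{k_i}\mathbf 1_{k_i}$ supported on the block of rows coming from row $i$ (zero elsewhere), and $Q$ is the analogous $m' \times m$ matrix with $j$th column $\tfrac{1}{l_j}\mathbf 1_{l_j}$. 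The key facts about $P$ are: its columns are orthogonal, and the $i$th column has norm $\tfrac{1}{k_i}\sqrt{k_i} = \tfrac{1}{\sqrt{k_i}} = 1/w_i$. Hence $P = \hat P D_w^{-1}$ where $\hat P$ has orthonormal columns (an isometry on its column space) and $D_w = \diag(w)$; similarly $Q = \hat Q D_v^{-1}$ with $\hat Q$ an isometry. Then
\[
A' = P A Q^T = \hat P\, D_w^{-1} A D_v^{-1}\, \hat Q^T = \hat P \tilde A \hat Q^T,
\]
since $(D_w^{-1} A D_v^{-1})_{ij} = a_{ij}/(w_i v_j) = \tilde a_{ij}$. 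Because $\hat P$ and $\hat Q$ have orthonormal columns, multiplying on the left by $\hat P$ and on the right by $\hat Q^T$ preserves the operator norm (it embeds $\tilde A$ isometrically, padding with zeros), so $\|A'\| = \|\tilde A\|$.

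Combining, $\Gamma(A') = \|A'\|\sqrt{n'm'} = \|\tilde A\|\,\|w\|\,\|v\|$, as claimed. The step I expect to be the only real obstacle is bookkeeping the block structure cleanly — making sure the factorization $A' = \hat P \tilde A \hat Q^T$ is set up so that $\hat P^T \hat P = I_n$ and $\hat Q^T \hat Q = I_m$ exactly, and hence that the singular values of $\tilde A$ are exactly the nonzero singular values of $A'$; the rest is a direct computation. (One could alternatively argue via \eqref{eq:max_s_val}: $\|A'\|^2 = \lmax((A')^T A')$ and check $(A')^T A' = \hat Q \tilde A^T \tilde A \hat Q^T$ has the same nonzero spectrum as $\tilde A^T \tilde A$, but the isometry argument above is cleaner.)
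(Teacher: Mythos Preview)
Your proof is correct, and the factorization $A' = \hat P\,\tilde A\,\hat Q^T$ with $\hat P^T\hat P = I_n$, $\hat Q^T\hat Q = I_m$ is set up exactly right; the singular values of $\tilde A$ are indeed the nonzero singular values of $A'$, so $\|A'\| = \|\tilde A\|$ and the rest follows.

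Your route differs from the paper's. The paper first treats the row-splitting-only case ($l_1=\cdots=l_m=1$) by a direct computation: it writes $\tilde A^T\tilde A = \sum_i w_i^{-2}\, a_{i\cdot}^T a_{i\cdot}$ and $(A')^T A' = \sum_i \sum_{t=1}^{k_i} k_i^{-2}\, a_{i\cdot}^T a_{i\cdot}$, observes these are equal, and then reduces the general case to the row-only case via transposition (split columns first to get an intermediate matrix $B$, apply the row result to $B$, then apply it again to a suitable transpose). Your argument handles row and column splittings in one shot by exhibiting the isometric embedding $A' = \hat P\,\tilde A\,\hat Q^T$ directly. This is cleaner and more structural: it makes explicit \emph{why} the operator norm is preserved (isometric padding), and it avoids the two-stage reduction. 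The paper's approach, on the other hand, is more elementary in that it never needs to name or verify the isometry property --- it just checks that two explicit $m\times m$ matrices coincide --- which some readers may find easier to follow line by line.
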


\begin{proof}
	The matrix $ A' $  is of size $ (k_1+\ldots  + k_n) \times (l_1 + \ldots + l_m) = \nrm {w}^2 \nrm v^2$. Hence it is sufficient to show that  $ \nrm{A'} = \nrm{\tilde A} $.

	We   begin by showing this statement in case when $ l_1=l_2 = \ldots = l_m  = 1  $, i.e., only row-splitting takes place.

	Denote $ M_{i} =  a_{i \cdot}^T a_{i \cdot} $. By \eqref{eq:max_s_val},
	\[
	\nrm {\tilde A}^2
	=
	\lmax(\tilde A^T\tilde A),
	\qquad
	\nrm { A'}^2
	=
	\lmax({A'}^T  A'),
	\]
	Notice that
	\[
	\tilde A^T \tilde A
	=
	\begin{pmatrix}
	w_1^{-1} a_{1 \cdot}^T
	&
	w_2^{-1}  a_{2 \cdot}^T
	&
	\ldots
	&
	w_n^{-1} a_{n \cdot}^T
	\end{pmatrix}
	\begin{pmatrix}
	w_1^{-1} a_{1 \cdot}
	\\
	w_2^{-1}  a_{2 \cdot}
	\\
	\ldots
	\\
	w_n^{-1} a_{n \cdot}
	\end{pmatrix}
	=
	\sum_{i=1}^{n}  w_i^{-2} M_{i}.
	\]
	Similarly it can be obtained that
	\[
	{A'}^T  A'
	=
	\sum_{i=1}^{n}   \sum_{j=1}^{k_i}  \frac{1}{k_i^2}\, M_{i}  .
	\]
	Since
	\[
	\sum_{i=1}^{n}   \sum_{j=1}^{k_i}  \frac{1}{k_i^2}\, M_{i}
	=
	\sum_{i=1}^{n}    \frac{1}{k_i}\, M_{i}
	=
	\sum_{i=1}^{n}  w_i^{-2} M_{i},
	\]
	we conclude that
	\[
	{A'}^T  A' =\tilde A^T \tilde A,
	\]
	which implies $  \nrm  {\tilde A}= \nrm  { A'} $.
	
	Now consider the case of arbitrary $ l_j \in \mbb N $. Denote by $ B $ the $ n \times (l_1  + \ldots+ l_m) $ matrix, obtained from $ A   $ by splitting each of its columns $ a_{\cdot j} $   into $ l_j $ columns. Then $ A \obtain B \obtain A' $. By the previous arguments,
	\[
	\nrm {A'} = \nrm{\tilde B},
	 \]
	where $  \tilde B$ is $ \tilde B$ is $ n \times (l_1  + \ldots+ l_m)  $ matrix  with $ i $th row equal to
	\[
	\begin{pmatrix}
	\underbrace{ \frac{a_{i1}}{l_1 \sqrt {k_i}}   }_{\text{repeated }l_1 \text{ times}}
	&
	\underbrace{ \frac{a_{i2}}{l_2 \sqrt {k_i}}   }_{\text{repeated }l_2 \text{ times}}
	&
	\ldots
	&
	\underbrace{ \frac{a_{im}}{l_m \sqrt {k_i}}   }_{\text{repeated }l_m \text{ times}}
	\end{pmatrix}.
	 \]
	Then the transpose of $ \tilde B $ can be obtained from the $ m \times n $  matrix $ C = \lr{C_{ji}} $,
	\[
	C_{ji} = \frac{a_{ji}}{\sqrt {k_i}}, \quad i \in [n], \ j\in [m],
	 \]
	by splitting the $ j$th row of $  C$ into $ l_j $ rows.

	By previous argument,
	\[
	\nrm{\tilde B^T} = \nrm{\tilde C},
	 \]
	where $ \tilde C = \tilde A^T $.
	Thus we conclude
	\[
	\nrm {A'} = \nrm{\tilde B}=	\nrm{\tilde B^T} = \nrm{\tilde A^T} = \nrm {\tilde A}.
	 \]
\end{proof}

This shows that $ \Gamma(A') $,  for every matrix $ A' $ which can be obtained from $ A $ by splitting rows/columns, can be characterized by vectors $ w $, $ v $ (s.t. the squares of components of $ w,v $ are rational numbers). The   converse  is also true:

\begin{lemma}\label{th:claim3}
	Suppose that
	$ A  \in  \mc A_{n,m}$  but vectors $ w \in \nonzVec{n}  $, $ v \in \nonzVec{m}$ are such that $ w_i^2 \in \mbb Q $, $ v_j^2 \in \mbb Q $ for all $ i,j $. Then there exist numbers $ k_i \in \mbb N $ and $ l_j \in \mbb N $ such that splitting $ A $'s   $ i $th  row  $a_{i \cdot}$   into $ k_i $ rows
	and
	the $ j $th column $ a_{\cdot j} $  into $ l_j $ rows yields a matrix $ A' $ such that $ \Gamma(A') = \nrm{\tilde A} \nrm w \nrm v $ where $\nrm{\tilde A} = ({\tilde a_{ij}})$, $  \tilde a_{ij} := \frac{a_{ij}}{w_i v_j} $.
\end{lemma}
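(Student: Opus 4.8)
The plan is to deduce this as a rescaled version of Lemma~\ref{th:claim1}. The first observation is that the target quantity $\nrm{\tilde A}\,\nrm w\,\nrm v$ is unchanged if we replace $w$ by $cw$ and $v$ by $dv$ for any reals $c,d>0$: since $A\in\mc A$, the matrix $\tilde A$ is well defined, and under this substitution $\tilde a_{ij}=a_{ij}/(w_iv_j)$ becomes $\tfrac1{cd}\tilde a_{ij}$, so $\nrm{\tilde A}$ acquires a factor $\tfrac1{cd}$ while $\nrm{w}$ acquires $c$ and $\nrm v$ acquires $d$, and the three factors cancel. Hence it suffices to find $c,d>0$ for which $c^2w_i^2\in\mbb N$ for every $i\in[n]$ and $d^2v_j^2\in\mbb N$ for every $j\in[m]$; then I will set $k_i:=c^2w_i^2$ and $l_j:=d^2v_j^2$.

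To produce such $c$ and $d$, I would use that each $w_i^2$ is a positive rational: write $w_i^2=p_i/q_i$ in lowest terms with $p_i,q_i\in\mbb N$ (positivity of $w_i$ makes $w_i^2>0$, so $p_i,q_i\geq 1$), let $Q:=\mathrm{lcm}(q_1,\dots,q_n)$ (or just $q_1\cdots q_n$), and put $c:=\sqrt Q$. Then $c^2w_i^2=Q\,p_i/q_i=(Q/q_i)p_i$ is a positive integer for every $i$, so $k_i:=c^2w_i^2\geq 1$ is a legitimate number of row-splits. Repeating the construction with the $v_j^2$ gives $d>0$ and valid column-split counts $l_j:=d^2v_j^2\geq 1$. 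Note that $c$ and $d$ need not be rational; only the integers $k_i,l_j$ enter Lemma~\ref{th:claim1}, and $\sqrt{k_i}=cw_i$, $\sqrt{l_j}=dv_j$ since all quantities are positive.

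Finally, applying Lemma~\ref{th:claim1} with these $k_i$ and $l_j$ yields a matrix $A'$, obtained from $A$ by splitting row $i$ into $k_i$ rows and column $j$ into $l_j$ columns, with $\Gamma(A')=\nrm{\hat A}\,\nrm{\hat w}\,\nrm{\hat v}$, where $\hat w_i=\sqrt{k_i}=cw_i$, $\hat v_j=\sqrt{l_j}=dv_j$, and $\hat a_{ij}=a_{ij}/(\hat w_i\hat v_j)=\tilde a_{ij}/(cd)$. By the scaling invariance noted in the first paragraph, $\nrm{\hat A}\,\nrm{\hat w}\,\nrm{\hat v}=\nrm{\tilde A}\,\nrm w\,\nrm v$, which is exactly the claimed identity. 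I do not expect a genuine obstacle here: the argument is pure bookkeeping on top of Lemma~\ref{th:claim1}. The only points requiring a little care are that strict positivity of the $w_i$ and $v_j$ (from $w\in\nonzVec n$, $v\in\nonzVec m$) guarantees the $q_i$, and hence the $k_i,l_j$, are genuine positive integers rather than zero, and that $A\in\mc A$ is what makes $\tilde A$ — and therefore the whole statement — meaningful.
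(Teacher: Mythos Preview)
Your proposal is correct and follows essentially the same approach as the paper: both arguments clear denominators by multiplying $w$ and $v$ by suitable scalars so that the squared components become integers, observe that $\nrm{\tilde A}\,\nrm w\,\nrm v$ is invariant under such rescalings, and then invoke Lemma~\ref{th:claim1}. The only cosmetic difference is that the paper uses the product of the denominators while you use their $\mathrm{lcm}$ (or product), which is immaterial.
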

\begin{proof}
	First note that the statement is true if $ w_i^2  \in \mbb N $ and $ v_j^2 \in \mbb N $ for all $ i,j $, since then one takes $ k_i = w_i^2 $ and $ l_j = v_j^2 $.
	
Since $ w_i^2 \in \mbb Q $, $ v_j^2 \in \mbb Q $, we have	
$ w_i^2 = \frac{p_i}{p'_i} $ and $ v_j^2 =  \frac{q_j}{q'_j} $ for some
natural numbers $ p_i$, $ p'_i $, $ q_j $ and $ q_j' $.
	%	Let $ P  $ be the least common multiple of $ p'_1, \ldots, p'_n $ and $ Q $ be the least common multiple of $ q'_1, \ldots, q'_m $.
	Denote $ P = \prod_i p'_i $ and $ Q = \prod_j q'_j $. Let $ \hat w_i =  w_i \, \sqrt {P} $,  $ \hat v_j =  v_j\, \sqrt {Q} $ and  $ \hat  A = (\hat a_{ij}) $, where
	\[
	\hat a_{ij}  =  \frac{a_{ij}}{\hat w_i \hat v_j} = \frac{\tilde a_{ij}}{\sqrt {PQ}}.
	\]
	Then
	\[
	\nrm{ \hat A } = \frac{1}{\sqrt{PQ}}  \nrm {\tilde A},
	\quad
	\nrm {\hat w} = \sqrt{P} \nrm w,
	\quad
	\nrm {\hat v} = \sqrt{Q} \nrm v.
	\]
	Thus
	\[
	\nrm{\tilde A} \nrm w \nrm v  = \nrm{\hat A} \nrm {\hat w}\nrm {\hat v} .
	\]
	Moreover, $ \hat w_i^2 \in \mbb N $, $ \hat v_j^2 \in \mbb N $, thus one can take $ k_i = \hat w_i^2 $ and $ l_j = \hat v_j^2 $. Now, by performing the corresponding row/column splitting, one obtains a matrix $ A' $ satisfying
	\[
	\Gamma(A')  = \nrm{\hat A} \nrm {\hat w}\nrm {\hat v}= \nrm{\tilde A} \nrm w \nrm v.
	\]
\end{proof}

We can consider an even more general situation:
\begin{lemma}\label{th:claim4}
	Suppose that
	$ A  \in  \mc A_{n,m}$ and $ w \in \nonzVec{n}   $, $ v \in \nonzVec{m}$.
	
	Then there exist sequences $ (k_{i,N} )_{N}  \subset \mbb N $ and $(l_{j,N} )_{N}  \subset  \mbb N $ such that
	\[
	\lim_{N \to \infty}   \Gamma(A_N') = \nrm{\tilde A} \nrm w \nrm v.
	\]
	
	Here by $ \tilde A $ we denote  the matrix with components $  \tilde a_{ij} = \frac{a_{ij}}{w_i v_j} $, but $ A_N' $
	stands for the matrix which is obtained from $ A $ by splitting its   $ i $th  row  $a_{i \cdot}$   into $ k_{i,N} $ rows
	and
	the $ j $th column $ a_{\cdot j} $  into $ l_{j ,N}$ rows.
\end{lemma}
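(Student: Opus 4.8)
The plan is to reduce Lemma~\ref{th:claim4} to Lemma~\ref{th:claim1} by choosing the split counts so that, after a suitable rescaling, they approximate the prescribed weights $w$ and $v$. Concretely, for each $N \in \mbb N$ I would set
\[
k_{i,N} = \lceil N w_i^2 \rceil \ \ (i \in [n]), \qquad l_{j,N} = \lceil N v_j^2 \rceil \ \ (j \in [m]),
\]
and let $A_N'$ be obtained from $A$ by splitting the $i$th row into $k_{i,N}$ rows and the $j$th column into $l_{j,N}$ columns. Since $w \in \nonzVec n$ and $v \in \nonzVec m$, each $w_i^2, v_j^2 > 0$, so $k_{i,N}, l_{j,N}$ are genuine positive integers, and $k_{i,N}/N \to w_i^2$, $l_{j,N}/N \to v_j^2$ as $N\to\infty$.

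By Lemma~\ref{th:claim1}, $\Gamma(A_N') = \nrm{\tilde A^{(N)}}\,\nrm{w^{(N)}}\,\nrm{v^{(N)}}$, where $w_i^{(N)} = \sqrt{k_{i,N}}$, $v_j^{(N)} = \sqrt{l_{j,N}}$, and $\tilde a_{ij}^{(N)} = a_{ij}/(w_i^{(N)} v_j^{(N)})$. The key is to track the $N$-dependence by grouping
\[
\Gamma(A_N') = \big(N\,\nrm{\tilde A^{(N)}}\big)\cdot\frac{\nrm{w^{(N)}}}{\sqrt N}\cdot\frac{\nrm{v^{(N)}}}{\sqrt N},
\]
a product of one factor that stays bounded and two factors that diverge, but whose rates cancel by construction. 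Indeed $\nrm{w^{(N)}}^2/N = \sum_i k_{i,N}/N \to \nrm w^2$, so $\nrm{w^{(N)}}/\sqrt N \to \nrm w$, and likewise $\nrm{v^{(N)}}/\sqrt N \to \nrm v$; while $N\tilde a_{ij}^{(N)} = a_{ij}\big/\big((w_i^{(N)}/\sqrt N)(v_j^{(N)}/\sqrt N)\big) \to a_{ij}/(w_i v_j) = \tilde a_{ij}$ entrywise, using positivity of $w_i, v_j$ once more to keep the denominators away from $0$. Since entrywise convergence of matrices of fixed size implies convergence in operator norm (via $\nrm{B}\le\nrm{B}_F$ together with $|\nrm B - \nrm C|\le\nrm{B-C}$), we get $N\,\nrm{\tilde A^{(N)}} = \nrm{N\tilde A^{(N)}} \to \nrm{\tilde A}$. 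Multiplying the three limits gives $\Gamma(A_N') \to \nrm{\tilde A}\,\nrm w\,\nrm v$, as required.

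I do not anticipate a genuine obstacle: given Lemmas~\ref{th:claim1} and~\ref{th:claim3}, the whole argument is continuity plus approximation. The only point needing care is the bookkeeping of the $\sqrt N$ normalization — pairing the vanishing factor $\nrm{\tilde A^{(N)}}$ with the two diverging factors $\nrm{w^{(N)}}$, $\nrm{v^{(N)}}$ so that the product converges to the stated finite value — together with the minor observation that strict positivity of $w$ and $v$ is exactly what makes the split counts positive integers and the rescaled entries $\tilde a_{ij}^{(N)}$ converge. An alternative, slightly more roundabout route is to approximate $w,v$ by vectors with rational squares and quote Lemma~\ref{th:claim3} directly, but the explicit choice of $k_{i,N}, l_{j,N}$ above avoids that extra layer.
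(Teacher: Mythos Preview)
Your proof is correct and is essentially the same continuity-plus-approximation argument as the paper's, just organized slightly differently. The paper in fact takes precisely the ``roundabout route'' you mention at the end: it approximates $w,v$ by vectors $w^{(N)},v^{(N)}$ whose squared components are rational, invokes Lemma~\ref{th:claim3} to obtain the corresponding $A'_N$, and then lets continuity of $\nrm{\cdot}$, $\nrm{w^{(N)}}$, $\nrm{v^{(N)}}$ do the rest. Your version is marginally more direct in that it bypasses Lemma~\ref{th:claim3} entirely by writing down explicit integer split counts $k_{i,N}=\lceil Nw_i^2\rceil$, $l_{j,N}=\lceil Nv_j^2\rceil$ and appealing to Lemma~\ref{th:claim1} alone; the $\sqrt{N}$ bookkeeping you spell out is exactly what is implicit in the paper's use of Lemma~\ref{th:claim3}.
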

\begin{proof}
	We choose two sequences of vectors $w^{(1)}, w^{(2)}, \ldots$ and
	$v^{(1)}, v^{(2)}, \ldots$ so that
	$w^{(N)} \in Q^{n}_{+}$ and $w=\lim_{N\rightarrow\infty} w^{(N)}$
	and similarly for $v^{(N)}$ and $v$.
	Let $\tilde{A}^{(N)}$ be a matrix with entries
	$\tilde{a}^{(N)}_{ij}=\frac{a_{ij}}{w_i v_j}$.
	
	Then, by Lemma \ref{th:claim3}, there are matrices $A'_N$ such that
	$\Gamma(A'_N) = \| \tilde{A}^{(N)} \| \|w^{(N)}\| \|v^{(N)}\|$.
	Let $k_{i, N}$ and $l_{i, N}$ be the values of $k_i$ and $l_i$ in the application
	of Lemma \ref{th:claim3}.
	By continuity, if $N\rightarrow\infty$, we have
	$\|w^{(N)}\| \rightarrow \|w\|$, $\|v^{(N)}\| \rightarrow \|v\|$,
	$\| \tilde{A}^{(N)} \|\rightarrow \|\tilde{A}\|$.
	
	Hence, $\lim_{N \to \infty}   \Gamma(A_N') = \nrm{\tilde A} \nrm w \nrm v$.
\end{proof}

Suppose that $ A   \in  \mc A_{n,m}$  and $ w \in \nonzVec{n}   $, $ v \in \nonzVec{m}$  are fixed. Let $ \tilde A $ be  the matrix with components
\[
\tilde a_{ij} = \frac{a_{ij}}{w_i v_j}.
\]
Notice that $ \tilde A = D_{w}^{-1} A D_v^{-1}   $. Denote
\[
F_A(w,v)
=
\nrm{D_{w}^{-1} A D_v^{-1}} \nrm{w} \nrm v.
\]
Then Claims  \ref {th:claim1} and \ref {th:claim4}   together imply that
\[
\inf _{A' :  A \obtain A'}
\Gamma(A')
=
\inf_{
	\substack{
		w \in \nonzVec{n}\\
		v \in \nonzVec{m}
	}
}
F_A(w,v).
\]
Denote the latter infimum with $ F_A^T  $.
In view of Lemma  \ref {th:claim6} this means that
\begin{equation}\label{eq:main2}
G(A)
=
\frac{\inf _{A' :  A \obtain A'}
	\Gamma(A')}{\nrm{A}_{\infty \to 1}}
=
\frac{
	F_A^T
}{\nrm{A}_{\infty \to 1}}.
\end{equation}

\subsection{Proof of Lemma \ref*{th:claim0}}\label{sec:6}

We recall the following characterization of matrices with $ \nrm A_G \leq 1 $; for a proof, see \cite[p. 239]{Pisier}.

\begin{lemma}\label{th:claim7}
	For every matrix $ A $ (of size $ n\times n $), the inequality $ \nrm A_G \leq 1 $ holds iff there is a matrix $ \tilde A $ (of size $ n\times n $) and vectors  $ w, v \in \mbb R^n $ with non-negative components s.t.
	$ \nrm w = \nrm {v} = 1 $, $ \nrm {\tilde A} \leq 1 $ and for all $ i,j \in [n] $: $ a_{ij}  = \tilde a_{ij} w_i v_j  $.
\end{lemma}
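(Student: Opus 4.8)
The statement is exactly the matrix form of Grothendieck's factorization theorem recalled from \cite[p.\ 239]{Pisier}, so in the paper one may simply cite it; here is the route I would take to prove it. The ``if'' direction is a one-line estimate, and the real content is the ``only if'' direction, which I would attack through semidefinite-programming duality (equivalently a minimax argument / Hahn--Banach separation).

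For the ``if'' direction, assume $a_{ij}=\tilde a_{ij}w_iv_j$ with $\nrm{\tilde A}\le 1$, with $w,v$ having non-negative components and $\nrm w=\nrm v=1$. Fix unit vectors $p_1,\ldots,p_n,q_1,\ldots,q_n\in\mbb R^r$, let $X$ be the $n\times r$ matrix whose $i$-th row is $w_ip_i$ and $Y$ the $n\times r$ matrix whose $j$-th row is $v_jq_j$, so that $\lra{w_ip_i,v_jq_j}=(XY^{T})_{ij}$, and expand $XY^{T}=\sum_{\ell}x^{(\ell)}(y^{(\ell)})^{T}$ over columns. Then
\[
\sum_{i,j}a_{ij}\lra{p_i,q_j}=\sum_{i,j}\tilde a_{ij}(XY^{T})_{ij}=\sum_{\ell}(x^{(\ell)})^{T}\tilde A\,y^{(\ell)}\le\sum_{\ell}\nrm{x^{(\ell)}}\,\nrm{y^{(\ell)}}\le\tfrac12\bigl(\nrm{X}_{F}^{2}+\nrm{Y}_{F}^{2}\bigr)=\tfrac12\bigl(\nrm w^{2}+\nrm v^{2}\bigr)=1,
\]
using $\nrm{\tilde A}\le1$ and $\nrm{X}_{F}^{2}=\sum_i w_i^{2}\nrm{p_i}^{2}=\nrm w^{2}$; taking the supremum over $r$ and all unit vectors gives $\nrm A_G\le1$.

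For the ``only if'' direction, suppose $\nrm A_G\le1$. I would first rewrite $\nrm A_G$ as the value of a semidefinite program: a finite family of norm-$\le 1$ vectors is precisely the column set of a positive semidefinite Gram matrix with diagonal entries $\le1$, and conversely every positive semidefinite $2n\times2n$ matrix $Z$ with $Z_{tt}\le1$ arises this way (from vectors in $\mbb R^{2n}$); writing $Z$ in $n\times n$ blocks as $\left(\begin{smallmatrix}R&M\\M^{T}&S\end{smallmatrix}\right)$, and using that the supremum defining $\nrm A_G$ may be taken over norm-$\le1$ vectors (as noted in the excerpt), $\nrm A_G\le 1$ becomes the assertion that $\max\{\lra{A,M}:Z\succeq0,\ Z_{tt}\le1\ \forall t\}\le1$. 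Its SDP dual is $\min\{\sum_t d_t:d\ge0,\ \diag(d)\succeq\hat A\}$ where $\hat A=\left(\begin{smallmatrix}0&A/2\\A^{T}/2&0\end{smallmatrix}\right)$; Slater's condition holds on the primal side ($Z=\tfrac12 I$ is strictly feasible), so there is no duality gap and the dual optimum is attained, producing $d\ge0$ with $\sum_t d_t\le1$ and $\diag(d)\succeq\hat A$. After discarding the zero rows and columns of $A$ (for which the corresponding entries of $\tilde A$ can be set to $0$ and of $w,v$ to arbitrary positive values, to be fixed in the final normalization), every $d_t$ is strictly positive, since a vanishing diagonal entry of the positive semidefinite matrix $\diag(d)-\hat A$ would force a vanishing row of $\hat A$. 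Writing $D_1,D_2$ for the two $n\times n$ diagonal blocks of $\diag(d)$, the Schur complement shows that $\diag(d)-\hat A\succeq0$ (with $D_1,D_2$ invertible) is equivalent to $\nrm{D_1^{-1/2}AD_2^{-1/2}}\le2$. Setting $\tilde A:=\tfrac12 D_1^{-1/2}AD_2^{-1/2}$ and $w_i:=\sqrt{2(D_1)_{ii}}$, $v_j:=\sqrt{2(D_2)_{jj}}$ gives $a_{ij}=\tilde a_{ij}w_iv_j$, $\nrm{\tilde A}\le1$, $w,v\ge0$, and $\nrm w^{2}+\nrm v^{2}=2\sum_t d_t\le2$. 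Finally, the substitution $(w,v)\mapsto(\lambda w,\lambda^{-1}v)$ leaves $\tilde A$ (and the factorization) unchanged and can be used to arrange $\nrm w=\nrm v$; this common value is $\le1$ because $2\nrm w\,\nrm v\le\nrm w^{2}+\nrm v^{2}\le2$, and one last rescaling that divides $w,v$ by it and multiplies $\tilde A$ by its square (which only shrinks $\nrm{\tilde A}$) yields $\nrm w=\nrm v=1$.

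The only step that is more than bookkeeping is the convex-duality passage: reformulating $\nrm A_G\le1$ as the value of a semidefinite program and invoking strong duality (Slater) to extract the diagonal weights $d$ --- everything afterward, the Schur-complement identity and the renormalization of $w$ and $v$, is routine. An alternative to the SDP language would be to run the same argument as a separation / minimax statement over the compact convex set of admissible Gram matrices, but the content is the same.
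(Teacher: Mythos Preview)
Your proposal is correct and in fact does more than the paper: the paper gives no proof of this lemma but simply cites \cite[p.\ 239]{Pisier}, exactly as you anticipated in your opening sentence. Your SDP-duality argument for the ``only if'' direction (rewriting $\nrm{A}_G$ as a semidefinite program, invoking Slater to extract the diagonal dual weights, then reading off $w,v,\tilde A$ via the Schur complement) is a standard and valid route to the result; the only cosmetic point is that for zero rows or columns you may simply take the corresponding component of $w$ or $v$ equal to $0$ (which the lemma allows), so the factorization obtained on the reduced matrix extends immediately without any further renormalization.
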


From this it is easy to obtain the following:
\begin{lemma}\label{th:claim8}
	For every matrix $ A  \in \mc A_{n,n} $   there exists a matrix $ \tilde A  \in \mc A_{n,n}$  and vectors  $ w, v \in \nonzVec{n}  $ s.t.
	$ \nrm w = \nrm {v} = 1 $, $ \nrm {\tilde A} = \nrm{A}_G $ and  $ \tilde A = D_w^{-1} A D_v $. Moreover, $ w $ and $ v $ minimize the function $ F_A(\cdot, \cdot) $, i.e.,
	\[
	F^T_A  =  \nrm {\tilde A}\nrm w   \nrm {v}=  \nrm{A}_G.
	\]
\end{lemma}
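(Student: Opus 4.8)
The plan is to obtain Lemma~\ref{th:claim8} directly from the characterization in Lemma~\ref{th:claim7}, supplemented by one elementary bound relating $\nrm{\cdot}_G$ to the operator norm. Since $A\in\mc A_{n,n}$ we have $A\neq 0$, hence $\nrm A_G>0$ and $\nrm{A/\nrm A_G}_G=1$; applying Lemma~\ref{th:claim7} to $A/\nrm A_G$ produces a matrix $\tilde B$ and non-negative vectors $w,v$ with $\nrm w=\nrm v=1$, $\nrm{\tilde B}\le1$, and $a_{ij}/\nrm A_G=\tilde b_{ij}w_iv_j$ for all $i,j$. Put $\tilde A=\nrm A_G\,\tilde B$, so that $a_{ij}=\tilde a_{ij}w_iv_j$ and $\nrm{\tilde A}=\nrm A_G\,\nrm{\tilde B}\le\nrm A_G$. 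Because $A$ has no zero row or column, none of the $w_i$ or $v_j$ can vanish (a zero $w_i$ would force the $i$-th row of $A$ to be zero), so $w,v\in\nonzVec n$ and the identity $a_{ij}=\tilde a_{ij}w_iv_j$ is exactly $\tilde A=D_w^{-1}AD_v^{-1}$; the same identity shows $\tilde A$ has no zero rows or columns either, i.e.\ $\tilde A\in\mc A_{n,n}$.

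The next step is the reverse estimate $\nrm{\tilde A}\ge\nrm A_G$. For this I would record the general fact that, for any matrix $M=(m_{ij})$ and any vectors $u_i,z_j$ lying in a common space $\mbb R^r$,
\[
\sum_{i,j}m_{ij}\lra{u_i,z_j}\le\nrm M\Big(\sum_i\nrm{u_i}^2\Big)^{1/2}\Big(\sum_j\nrm{z_j}^2\Big)^{1/2};
\]
this follows by rewriting the left side as $\sum_{k=1}^r(u^{(k)})^TM\,z^{(k)}$, where $u^{(k)},z^{(k)}$ collect the $k$-th coordinates of the $u_i$ respectively the $z_j$, bounding each summand by $\nrm M\,\nrm{u^{(k)}}\,\nrm{z^{(k)}}$, and applying Cauchy--Schwarz over $k$ together with $\sum_k\nrm{u^{(k)}}^2=\sum_i\nrm{u_i}^2$. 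Applying this with $M=\tilde A$, $u_i=w_ip_i$, $z_j=v_jq_j$ for unit vectors $p_i,q_j$, and using $a_{ij}\lra{p_i,q_j}=\tilde a_{ij}\lra{w_ip_i,v_jq_j}$, $\sum_i\nrm{w_ip_i}^2=\nrm w^2=1$, $\sum_j\nrm{v_jq_j}^2=\nrm v^2=1$, yields $\sum_{i,j}a_{ij}\lra{p_i,q_j}\le\nrm{\tilde A}$; taking the supremum over all such $p_i,q_j$ gives $\nrm A_G\le\nrm{\tilde A}$, so $\nrm{\tilde A}=\nrm A_G$.

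It remains to check that $(w,v)$ minimizes $F_A$. Since $F_A(\lambda w',\mu v')=F_A(w',v')$ for all $\lambda,\mu>0$, it is enough to consider $w',v'\in\nonzVec n$ with $\nrm{w'}=\nrm{v'}=1$; for such a pair set $\tilde A'=D_{w'}^{-1}AD_{v'}^{-1}$ and run the estimate of the previous paragraph with $\tilde A',w',v'$ in place of $\tilde A,w,v$ to get $\nrm{\tilde A'}\ge\nrm A_G=\nrm{\tilde A}$. Hence $F_A(w',v')=\nrm{\tilde A'}\nrm{w'}\nrm{v'}\ge\nrm{\tilde A}=F_A(w,v)$, so $F_A^T=\nrm{\tilde A}\,\nrm w\,\nrm v=\nrm A_G$. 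Together with \eqref{eq:main2} this gives $G(A)=\nrm A_G/\nrm A_{\infty\to1}$, finishing the proof of Lemma~\ref{th:claim0}.

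The only real (and mild) obstacle is the auxiliary operator-norm bound above and the bookkeeping that converts the unit-vector scalars $w_i,v_j$ appearing in the factorization of $A$ into the normalization constraints $\sum_iw_i^2=\sum_jv_j^2=1$; everything else is a transcription of Lemma~\ref{th:claim7}. One should just be careful that the non-negativity supplied by that lemma is actually strict positivity in our situation — which is where the hypothesis $A\in\mc A_{n,n}$ enters.
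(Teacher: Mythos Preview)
Your proof is correct and follows essentially the same route as the paper: both start by applying Lemma~\ref{th:claim7} to the normalized matrix $A/\nrm A_G$ to produce the factorization $a_{ij}=\tilde a_{ij}w_iv_j$ with $\nrm w=\nrm v=1$ and $\nrm{\tilde A}\le\nrm A_G$, and both observe that $A\in\mc A_{n,n}$ forces $w,v\in\nonzVec n$. The only difference is how the reverse inequality $\nrm{\tilde A}\ge\nrm A_G$ (and the minimality of $(w,v)$) is established: the paper invokes the ``if'' direction of Lemma~\ref{th:claim7} and argues by contradiction, whereas you prove the underlying estimate $\sum_{i,j}a_{ij}\lra{p_i,q_j}\le\nrm{D_w^{-1}AD_v^{-1}}\nrm w\,\nrm v$ directly via Cauchy--Schwarz; this is exactly the content of that direction of Lemma~\ref{th:claim7}, so the two arguments are equivalent, yours being marginally more self-contained and avoiding the contradiction step.
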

\begin{proof}
	Suppose that a matrix $ A \in \mc A_{n,n}$  is scaled so that $ \nrm {A}_G  = 1 $.
	
	From Lemma \ref{th:claim7} the existence of $ \tilde A $ with $ \nrm {\tilde A} \leq 1 $ and $ w ,  v  \in \nonzVec{n} $ with $ \nrm w = \nrm v = 1 $ follows. % (take $ w=\lambda $, $ v = \lambda' $). 
Notice that $ w_i \neq 0 $ and $ w_j' \neq 0 $ for all $ i,j $, since otherwise  $ A \notin \mc A $. Similarly, also $  \tilde A \in \mc A_{n,n} $ must hold.

	We claim that    $  \nrm{\tilde A} = 1 $. Assume the contrary,  $  \nrm{\tilde A} = c \in (0,1)$.
	
	Let $ \tilde B $ be a $ n \times n $ matrix with $ \tilde b_{ij} = \tilde a_{ij}   / c  $,
	then $ \nrm {\tilde B} = 1 $ and by Lemma \ref{th:claim7} we have $ \nrm B_G \leq 1 $, where $ B = A/c $. But then $ \nrm {A}_G \leq c < 1 $, a contradiction.	Thus $ \nrm {\tilde A}_G  = 1 $.
	
	To prove the second part of the statement,  suppose that there are unit vectors $ \hat{w},\hat{v} \in  \nonzVec{n}  $ such that $ F_A(\hat{w},\hat{v}) =  s <1$.  Let $ \tilde X = D_{\hat w}^{-1} A D^{-1}_{\hat v}/s  $, then $ \nrm{\tilde X} =1 $. By Lemma \ref{th:claim7} we have $ \nrm X_G \leq 1 $, where $ X  = A/s$. But then $ \nrm A_G \leq s < 1 $, a contradiction.
\end{proof}

\begin{proof}[Proof of Lemma \ref{th:claim0}]
	\textbf{The case of $ A \in \mc A $}.
	
	Notice that
	\[
	\inf _{A' :  A \obtain A'}
	\Gamma(A')
	=
	\inf _{A' :  A'' \obtain A'}
	\Gamma(A'),
	\]
	where $ A'' $ is any matrix s.t. $ A \obtain A'' $. This means that  $ F_{A} ^T = F_{A'} ^T $, if $ A \obtain A'  $.
	To apply Lemma \ref{th:claim8}, transform $ A $ into  a square matrix  $ A' $ by splitting a row or  a column. Then
	\[
	F_A^T
	=
	F_{A'}^T
	\overset {\text{Lemma \ref{th:claim8}}}
	=
	\nrm{A'}_G
	\overset {\text{Lemma \ref{th:claim6}}}
	=
	\nrm{A}_G
	\]
	and, by \eqref{eq:main2},
	$ G(A) =  \nrm A_G/ \nrm A_{\infty \to 1}  $,
	proving \eqref{eq:main} for all $ A \in \mc A $.

	It remains to show that
	\eqref{eq:main} holds
	for all matrices $ A $.

	\textbf{The case of $ A \notin \mc A $}.
	Suppose that  $ A $ is a $ n\times m $ matrix and there are $ k$  zero rows and $ l $  zero columns. W.l.o.g. assume the non-zero rows/columns are the first, then
	\[
	A
	=
	\begin{pmatrix}
	\hat A & 0_{n-k,l} \\
	0_{k,m-l}   & 0_{k,l}
	\end{pmatrix},
	\]
	where $ \hat A \in \mc A_{n-k,m-l} $ (and $ 0_{a,b} $ stands for the zero matrix of size $ a \times b $).
	Notice that
	\[
	g(\hat A) = \frac{\nrm {\hat A}  \sqrt{ (n-k)(m-l) }}{\nrm {\hat A}_{\infty \to 1}}
	=
	\frac{\nrm { A}  \sqrt{ (n-k)(m-l) }}{\nrm {  A}_{\infty \to 1}}
	<
	\frac{\nrm { A}  \sqrt{nm}}{\nrm {  A}_{\infty \to 1}}
	=
	g(A).
	\]
	
	By the previous case, we have
	$
	G(\hat A)
	=
	\nrm {\hat A}_G/\nrm {\hat A}_{\infty \to 1}
	=
	\nrm A_G/\nrm A_{\infty \to 1}$.

	Clearly, for  every $ A' $ with $ A \obtain A' $  we have $ \hat A' $ s.t. $ \hat A \obtain \hat A' $  and $ g(\hat A' )  \leq g(A') $ (take $ \hat A' $ to  be the   minor of $ A' $, obtained by skipping all zero rows or columns).  Then
	$
	G(\hat A) \leq g(\hat A') < g(A')$.
	Taking infimum over all $ A' $ s.t. $ A \obtain A' $, inequality $ G(\hat A) \leq G(A ) $ follows.
	
	On the other hand,
	for every $ \hat A' $ s.t. $ \hat A \obtain \hat A' $ we have a sequence $ (A_N)_{N \in \mbb N}$ with $ A \obtain A_N $ for all $ N $ and $  \lim_{N \to \infty} g(A_N)  = g(\hat A' ) $: take the matrix
	\[
	B
	=
	\begin{pmatrix}
	\hat A' & 0_{p,l} \\
	0_{k,q}   & 0_{k,l}
	\end{pmatrix},
	\]
	where $ \hat A' $ is of size $ p \times q $ (i.e., %$ A \obtain B $ and 
$ B  $ is the matrix obtained by splitting the non-zero part of $ A $ in the same way how we split $\hat{A}$ to obtain $\hat{A'}$). Then the matrix $ A_N $ is obtained by splitting each row $ b_{i \cdot} $, $ i \in [p]  $ of $B$, and each column $ b_{\cdot j} $, $ j \in [q]$ of $B$ into $ N $ rows/columns. We have $A\obtain B \obtain A_N$ and
the resulting matrix $ A_N $ is of  size $ (Np + k) \times (Nq + l) $.
We denote the upper $ Np \times Nq $ submatrix of $A_N$ by $ B_N $. Then  $ B_N =  \frac{1}{N^2}  \hat A' \otimes J_{N,N} $, where $ J_{N,N} $ is the $ N \times  N$ all-1 matrix.
	
	We have
	\begin{align*}
	& \nrm{A_N} = \nrm {B_N} = \frac{\nrm {\hat A'} }{N};\\
	& \nrm{A_N}_{\infty \to 1} = \nrm {B_N}_{\infty \to 1} = \nrm {\hat A'}_{\infty \to 1};
	\\
	& g(A_N) =
	\frac{\nrm{A_N} \sqrt{(Np +k)\cdot (Nq+l)} }{ \nrm{A_N}_{\infty \to 1}}
	=
	\frac{
		\nrm {B_N} \sqrt{(Np +k)\cdot (Nq+l)}
	}{
	\nrm {B_N}_{\infty \to 1}
}
\\ & =
\frac{\nrm {\hat A'} \sqrt{pq}}{ \nrm {\hat A'}_{\infty \to 1}}
\cdot
\sqrt{
	\frac{Np +k }{N p} \cdot \frac{Nq +l}{Nq}
}
=
g(\hat A')
\sqrt{
	\lr{1 + \frac{c_1}{N}} \lr{1 + \frac{c_2}{N}}
},
\end{align*}
where $ c_1 = k/p $, $ c_2  = l/q $.

We see that
$
G(A) \leq
\lim_{N \to \infty} g(A_N)  = g(\hat A')$.
Taking infimum over all $ \hat A' $ s.t. $ \hat A \obtain \hat A' $, inequality $ G(\hat A) \geq G(A ) $ follows. Hence the two quantities  must be equal.
\end{proof}

\section{Proof of Theorem \ref{thm:abk}}
\label{app:abk}

We use the notion of {\em certificate complexity.}
Let $C$ be an assignment of values $C:S\rightarrow \{0, 1\}$ for some $S\subseteq [n]$.
We say that $x=(x_1, \ldots, x_n)$ is {\em consistent} with $C$ if it satisfies $x_i=C(i)$ for all $i\in S$.
We say that $C$ is a {\em certificate} for $f$ on an input $x$ if $x$ is consistent with $C$ and, for any $y\in \{0, 1\}^n$ that is consistent with $C$,
we have $f(y)=f(x)$. 

The certificate complexity of $f$ on an input $x$ (denoted by $C(f, x)$) is the smallest $|S|$ in a certificate $C$ for $f$ on the input $x$.
The certificate complexity of $f$ (denoted $C(x)$) is the maximum of $C(f, x)$ over all $x\in\{0, 1\}^n$.
(For more information on the certificate complexity and its connections to other complexity measures, we refer the reader to
the survey by Buhrman and de Wolf \cite{BW}.)

We use the same function as in the $Q(f)=\tilde{\Omega}(\deg^2(f))$ result of Aaronson et al. \cite{ABK}.
The construction of this function \cite{ABK} starts by designing
a function $g:\{-1, 1\}^n \rightarrow \{0, 1\}$ with $Q(g)=\tilde{\Omega}(n)$ and $C(g)=\tilde{O}(\sqrt{n})$.
(We omit the definition of $g$ because $Q(g)=\tilde{\Omega}(n)$ and $C(g)=\tilde{O}(\sqrt{n})$ are the only
properties of $g$ that we use.)

Then, they define $f$ as follows:
\begin{enumerate}
\item
The first $c=10 n \log n$ input variables of $f$ are interpreted as $c$ inputs $x^{(1)}\in\{0, 1\}^n, \ldots, x^{(c)}\in\{0, 1\}^n$
to the function $g$.
\item
These input variables are followed by $2^c$ groups of variables $y^{(m)}$, $m\in\{0, 1\}^c$, with each
group containing $c C(g) \log n$ variables.
The content of each $y^{(m)}$ is interpreted as descriptions for $c$ sets $S_1, \ldots, S_c\subseteq [n]$ with $|S_j|=C(g)$. 
A set $S_j$ is interpreted as a sequence of indices for $C(g)$ variables for the function $g(x^{(j)})$.
\item
$f=1$ if and only if, for some $m\in\{0, 1\}^c$, the group $y^{(m)}$ 
contains descriptions for sets $S_i$ such that, for each $i\in[c]$, the variables $x^{(i)}_j$, $j\in S_i$ form
an $m_i$-certificate.
\end{enumerate}

As shown in \cite{ABK}, $f$ satisfies $Q(f)=\tilde{\Omega}(n)$ and $\deg(f)=\tilde{O}(\sqrt{n})$.
A polynomial $p$ of degree $\tilde{O}(\sqrt{n})$ that represents $f$ can be constructed as follows:
\begin{enumerate}
\item
$p=\sum_{m\in\{0, 1\}^c} p_m$; % (with $p_m=1$ if $f=1$ and $m=g(x^{(1)}) \ldots g(x^{(c)})$);
\item
$p_m = \sum_{S_1, \ldots, S_c} p_{m, S_1, \ldots, S_c}$, with the summation over all tuples
$(S_1, \ldots, S_c)$ such that, for all $i\in[c]$, $S_i$ is a possible certificate for $g(x)=m_i$;
\item
$p_{m, S_1, \ldots, S_c} = q_{m, S_1, \ldots, S_c} \prod_{i=1}^c r_{i, m_i, S_i}$;
\item
$q_{m, S_1, \ldots, S_c} =1$ if the contents of $y^{(m)}$  describe sets $S_1, \ldots, S_c$ and $q_{m, S_1, \ldots, S_c} =0$ otherwise;
\item
$r_{i, m_i, S_i} = 1$ if the values of variables $x^{(i)}_j, j\in S_i$ certify that $g(x^{(i)})=m_i$ and $r_{i, m_i, S_i} = 0$ otherwise.
\end{enumerate} 
In the non-block-multilinear case, $q_{m, S_1, \ldots, S_c}$ is the product of $\frac{1+y^{(m)}_i}{2}$'s (for $i$'s
where we need $y^{(m)}_i=1$) and $\frac{1-y^{(m)}_i}{2}$'s (for $i$'s where we need $y^{(m)}_i=-1$).
$r_{i, m_i, S_i}$ is constructed similarly, by taking a product of
$\frac{1+x^{(i)}_j}{2}$'s and $\frac{1-x^{(i)}_j}{2}$'s for $j\in S_i$, to obtain the condition that $x^{i}_j$
take the values that are necessary so that $x^{i}_j$, $j\in S_i$ certify $g(x^{(i)})=m_i$.

We now modify this construction to obtain $\bmdeg(f)=\tilde{O}(\sqrt{n})$. 
Our polynomial has blocks of variables $z^{(i)}$, for $i\in[c C(g) (\log n+1)]$, with each $z^{(i)}$
consisting of a variable $z^{(i)}_0$, $c$ subblocks $x^{(i, 1)}, \ldots, x^{(i, c)}$ and $2^c$ subblocks 
$y^{(i, m)}$ for $m\in\{0, 1\}^c$.

The structure of the polynomial $p$ stays the same and we only modify the constructions of 
$q_{m, S_1, \ldots, S_c}$ and $r_{i, m_i, S_i}$.
To construct $q_{m, S_1, \ldots, S_c}$, we use the first $c C(g) \log n$ blocks $z^{(i)}$, taking 
the value of $y^{(m)}_i$ from the $i^{\rm th}$ block and using $z^{(i)}_0$ instead of 1 in the terms
$\frac{1\pm y^{(m)}_i}{2}$.

To construct $r_{i, m_i, S_i}$, we use $z^{(k)}$ for $k\in\{(c \log n+(i-1)) C(g)+1,
\ldots, (c\log n+i) C(g)\}$ and take $r_{i, m_i, S_i}$ to be the average of the 
desired product of $\frac{z^{(k)}_0+x^{(k, i)}_j}{2}$'s and $\frac{z^{(k)}_0-x^{(k, i)}_j}{2}$'s
over all the ways how one could use one term per block $z^{(k)}$.  

It is easy to see that, if all blocks $z^{(i)}$ contain the same assignment $z$, 
then $p(z, \ldots, z)$ is the same polynomial as in the non-block-multilinear case and is equal to $f(z)$.
We now show that $|p|\leq 1$ for any choice of $z^{(1)}, z^{(2)}, \ldots$ in which all the variables are in $\{-1, 1\}$.

For each $m$, all polynomials $q_{m, S_1, \ldots, S_c}$ use the same variables $z^{(i)}_0$ and $y^{(i, m)}_i$ and are defined
so that, for any choice of values for $z^{(i)}_0$'s and $y^{(i, m)}_i$'s, at most one of $q_{m, S_1, \ldots, S_c}$ 
is $\pm 1$ and the rest are 0. 
Let $S_{m, 1}, \ldots, S_{m, c}$ be the sets for which $q_{m, S_{m, 1}, \ldots, S_{m, c}}=\pm 1$ 
(if such sets exist).
Then, $p(z^{(1)}, \ldots, z^{(c C(g) (\log n+1))})$ is equal to the sum
\begin{equation}
\label{eq:imp} 
\sum_{m\in\{0, 1\}^c} a_m \prod_{i=1}^c  r_{i, m_i, S_{m, i}} 
\end{equation}
for some choice of signs $a_m \in\{-1, 1\}$.
We show

\begin{lemma}
\label{lem:abk}
Let 
$S_{m, i}$, $m\in\{0, 1\}^c$, $i\in [c]$ be such that $S_{m, i}$ is an $m_i$-certificate for the function $g$.
Then,
\[ \left| \sum_{m\in\{0, 1\}^c} a_m \prod_{i=1}^c  r_{i, m_i, S_{m, i}}\right|  \leq 1 \]
for any choice of signs $a_m \in\{-1, 1\}$.
\end{lemma}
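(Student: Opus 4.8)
The plan is to fix an arbitrary assignment of all the block variables to values in $\{-1,1\}$ and bound the expression pointwise (this suffices, since Definition~\ref{def:bmdeg}(ii) only asks for the bound on the Boolean cube). For $i\in[c]$ let $B_i$ be the set of the $C(g)$ block indices used in the definition of $r_{i,\cdot,\cdot}$; these sets are pairwise disjoint, and $r_{i,m_i,S_{m,i}}$ depends only on the variables $z^{(k)}_0$ and $x^{(k,i)}_j$ with $k\in B_i$. Concretely, $r_{i,m_i,S_{m,i}}$ is the average over bijections $\sigma:S_{m,i}\to B_i$ of $\prod_{j\in S_{m,i}}\tfrac{z^{(\sigma(j))}_0+v(j)\,x^{(\sigma(j),i)}_j}{2}$, where $v:S_{m,i}\to\{-1,1\}$ is the assignment certifying $g=m_i$ on $S_{m,i}$.

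First I would absorb the variables $z^{(k)}_0$. Since $z^{(k)}_0\in\{-1,1\}$ we have $\tfrac{z^{(k)}_0+v(j)\,x^{(k,i)}_j}{2}=z^{(k)}_0\cdot\tfrac{1+v(j)\,\eta^{(i)}_{k,j}}{2}$ with $\eta^{(i)}_{k,j}:=z^{(k)}_0\,x^{(k,i)}_j\in\{-1,1\}$, and because $\sigma$ is a bijection the product of the $z^{(\sigma(j))}_0$ over $j\in S_{m,i}$ equals $\prod_{k\in B_i}z^{(k)}_0$. Hence $r_{i,m_i,S_{m,i}}=\big(\prod_{k\in B_i}z^{(k)}_0\big)\,\widehat r_{i,m_i,S_{m,i}}$, where $\widehat r_{i,m_i,S_{m,i}}=\tfrac{1}{C(g)!}\operatorname{perm}\!\big(M^{(i,m)}\big)\in[0,1]$ and $M^{(i,m)}$ is the $C(g)\times C(g)$ matrix with $M^{(i,m)}_{jk}=1$ if $\eta^{(i)}_{k,j}=v(j)$ and $M^{(i,m)}_{jk}=0$ otherwise. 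Multiplying over $i$, the scalar $\prod_{k\in B_1\cup\cdots\cup B_c}z^{(k)}_0$ is a global sign independent of $m$, so $\big|\sum_m a_m\prod_i r_{i,m_i,S_{m,i}}\big|=\big|\sum_m a_m\prod_i\widehat r_{i,m_i,S_{m,i}}\big|$, and it suffices to bound the latter.

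The heart of the argument is a pointwise inequality that uses the defining property of certificates: any certificate forcing $g=0$ and any certificate forcing $g=1$ must assign opposite values to some common variable (otherwise an input consistent with both would force $g=0$ and $g=1$ simultaneously). So if $S$ forces $g=0$ and $T$ forces $g=1$, there is $j^\ast\in S\cap T$ with $v_S(j^\ast)\ne v_T(j^\ast)$, and since both lie in $\{-1,1\}$ they take the two distinct values. Bounding the number of valid $\sigma$ by the number of $\sigma$ satisfying only the constraint at $j^\ast$ gives $\widehat r_{i,0,S}\le\tfrac{(C(g)-1)!}{C(g)!}\,b_S=\tfrac{b_S}{C(g)}$, with $b_S=|\{k\in B_i:\eta^{(i)}_{k,j^\ast}=v_S(j^\ast)\}|$, and likewise $\widehat r_{i,1,T}\le\tfrac{b_T}{C(g)}$; since each $\eta^{(i)}_{k,j^\ast}$, $k\in B_i$, equals exactly one of the two values $v_S(j^\ast),v_T(j^\ast)$, we get $b_S+b_T=|B_i|=C(g)$, hence $\widehat r_{i,0,S}+\widehat r_{i,1,T}\le1$. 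Applying this with $S=S_{m,i}$, $T=S_{m',i}$ whenever $m_i\ne m'_i$, and setting $\tau_i(m):=\widehat r_{i,m_i,S_{m,i}}$, $P_i:=\max_{m:\,m_i=0}\tau_i(m)$, $Q_i:=\max_{m:\,m_i=1}\tau_i(m)$, we obtain $P_i+Q_i\le1$.

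It then remains only to assemble the pieces. Using $\tau_i(m)\ge0$, $|a_m|=1$, and $\tau_i(m)\le P_i$ when $m_i=0$, $\tau_i(m)\le Q_i$ when $m_i=1$,
\[
\Big|\sum_{m\in\{0,1\}^c}a_m\prod_{i=1}^c\widehat r_{i,m_i,S_{m,i}}\Big|\le\sum_{m\in\{0,1\}^c}\prod_{i=1}^c\tau_i(m)\le\sum_{m\in\{0,1\}^c}\prod_{i=1}^c w_i(m_i)=\prod_{i=1}^c\big(P_i+Q_i\big)\le1,
\]
where $w_i(0)=P_i$ and $w_i(1)=Q_i$. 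The step I expect to be the main obstacle is the conflict inequality $\widehat r_{i,0,S}+\widehat r_{i,1,T}\le1$: one has to notice that factoring out the $z^{(k)}_0$ reduces $r_{i,m_i,S_{m,i}}$ to a normalized permanent of a genuine $0/1$ matrix, and that a certificate for $g=0$ and a certificate for $g=1$, forced to disagree at some common coordinate $j^\ast$, split the $C(g)$ blocks of group $i$ into two classes whose sizes simultaneously bound the two permanents and sum to exactly $C(g)$.
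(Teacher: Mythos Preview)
Your proof is correct. It shares the same core idea as the paper's proof---the ``conflict inequality'' that any $0$-certificate $S$ and $1$-certificate $T$ for $g$ must disagree at some common variable $j^\ast$, which forces $\lvert r_{i,0,S}\rvert+\lvert r_{i,1,T}\rvert\le 1$---but the execution differs in two respects worth noting.

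First, you make explicit a reduction the paper leaves implicit: by factoring $\tfrac{z^{(k)}_0+v(j)x^{(k,i)}_j}{2}=z^{(k)}_0\cdot\tfrac{1+v(j)\eta^{(i)}_{k,j}}{2}$ and pulling out the global sign $\prod_{k}z^{(k)}_0$, you reduce each $r_{i,m_i,S}$ to a normalized permanent of a $0/1$ matrix, hence a number in $[0,1]$. This makes the subsequent estimates cleaner, since everything is non-negative. The paper instead works with $\lvert r_{i,m_i,S}\rvert$ directly.

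Second, the paper proves the lemma by induction on $c$: it peels off the last coordinate, pairs the terms for $m0$ and $m1$ using the conflict inequality at coordinate $c$, and realizes the result as a convex combination of sums over $\{0,1\}^{c-1}$. You avoid the induction entirely: having $\widehat r_{i,m_i,S_{m,i}}\ge 0$ and $P_i+Q_i\le 1$, you simply bound the whole sum by $\prod_i(P_i+Q_i)\le 1$ in one line. This is a genuine simplification that the non-negativity reduction buys you; the paper's inductive unrolling is doing the same work but less directly.
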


\proof
By induction on $c$. For $c=1$, this simplifies to
\begin{equation}
\label{eq:pm} -1 \leq a_0 r_{1, 0, S_{0, 1}} + a_1 r_{1, 1, S_{1, 1}} \leq 1 
\end{equation}
when $S_{0, 1}$ is a set of variables for a 0-certificate and 
$S_{1, 1}$ is a set of variables for a 1-certificate.
Since a 0-certificate and a 1-certificate cannot be true at the same time, there must
be $j\in S_{0, 1}\cap S_{1, 1}$ with $x_j$ taking one value in the 0-certificate and another
value in the 1-certificate. 

Let $p_0$ be the probability that, when we choose a block $z^{(i)}$ randomly among the blocks that are
used to define $r_{1, m_1, S_1}$'s, we get the value of $x^{(i, 1)}_j$ which matches the 0-certificate.
Then, the probability of getting the value that matches the 1-certificate is $1-p_0$ and 
we get that $\lrv{r_{1, 0, S_{0, 1}}}\leq p_0$ and $\lrv{r_{1, 1, S_{1, 1}}} \leq 1-p_0$.
This implies (\ref{eq:pm}) for any choice of signs $a_0, a_1\in\{-1, 1\}$.

For $c>1$, we can use the same argument to show that, for any $m\in\{0, 1\}^{c-1}$,
we have $r_{c, 0, S_{m0}}\leq p_m$ and $r_{c, 1, S_{m1}}\leq 1-p_m$ for some $p_m$ that depends on $m$.
Therefore, the sum of Lemma \ref{lem:abk} is upper bounded by 
\[ \sum_{m\in\{0, 1\}^{c-1}} \left( p_m a_{m0} \prod_{i=1}^{c-1} r_{i, m_i, S_{m0, i}} +
(1-p_m) a_{m1} \prod_{i=1}^{c-1} r_{i, m_i, S_{m1, i}} \right) .\]
We can express this sum as a probabilistic combination of sums 
\begin{equation}
\label{eq:reduced} \sum_{m\in\{0, 1\}^{c-1}} a_m  \prod_{i=1}^{c-1} r_{i, m_i, S_{m, i}} 
\end{equation}
where each $S_{m, i}$ is either $S_{m0, i}$ or $S_{m1, i}$ and each $a_m$ is either $a_{m0}$ or $a_{m1}$.
Each of sums (\ref{eq:reduced}) is at most 1 in absolute value by the inductive assumption.
\section{Proof of Theorem \ref{thm:dz}}
\label{sec:higher}

We expand the polynomial $ p$ in the Fourier basis as
\[
p(x) = \sum_{\substack{
		T \subset [n]: \\ 
		\lrv T \leq d
	}}
	\hat p_T \chi_T(x),
	\]
	where $ \chi_T(x) = \prod_{i \in T} x_i  $.
	For each $\chi_T(x)$, we define a corresponding block multilinear polynomial
	\[ 
	\chi'_T\lr{z^{(1)}, z^{(2)}, \ldots, z^{(d)}  }
	=
	\frac{(d-|T|)!}{d!} 
	\sum_{
		\substack{
			B \subset [d] :\\
			\lrv B = \lrv T
		}
	}
	\sum_{
		\substack{
			b :\\
			b : B \to T\\
			b \text{ -- bijection}
		}
	}
	\prod_{j \in B}  z^{(j)}_{b(j)} 
	\prod_{k \in [d]\setminus B}  z^{(k)}_{0}.
	\]
	We then take 
	\[ 
	\tilde p \lr{z^{(1)}, z^{(2)}, \ldots, z^{(d)}  } 
	= \sum_{T} \hat p_T \chi'_T \lr{z^{(1)}, z^{(2)}, \ldots, z^{(d)}  }.
	\]
	
	If we set
	\[ 
	\hat z^{(j)}_k = 
	\begin{cases}
	1, & k = 0;\\
	x_k, & k \in [n]
	\end{cases}
	\]
	for some $x\in \mbb R^n$, we get
	\begin{multline*}
	\chi'_T \lr{\hat z^{(1)}, \hat z^{(2)}, \ldots, \hat z^{(d)}  }
	=
	\frac{(d-|T|)!}{d!} 
	\sum_{
		\substack{
			B \subset [d] :\\
			\lrv B = \lrv T
		}
	}
	\sum_{
		\substack{
			b :\\
			b : B \to T\\
			b \text{ -- bijection}
		}
	}
	\prod_{j \in B}  x_{b(j)}
	=
	\frac{(d-|T|)!}{d!} \,
	\binom d {|T|} 
	\, |T|! 
	\prod_{j \in T}  x_{s}
	= \chi_T(x) 
	\end{multline*}
	and, therefore, $\tilde p \lr{1, x, 1, x, \ldots, 1, x } = p(x)$.

	Since $ \tilde p $ is a $ d $-linear map from $ E^d $ to $ \mathbb{R} $, where $ E = \mathbb{R}^{n+1} $, it satisfies \cite[Eq.7]{Thomas} the polarization identity 
	\[ 
	d!  \tilde p \lr{z^{(1)} , z^{(2)}   , \ldots ,  z^{(d)} }
	=
	\sum_{\substack{T \subset [d]: \\ T \neq \emptyset}} (-1)^{d-\lrv T} \, \hat p \lr{\sum_{j \in T}  z^{(j)}  },  \qquad z^{(1)}, \ldots ,  z^{(d)}  \in \mbb R^{n+1},
	\]
	where $ \hat p(z):= \tilde p(z,z,\ldots,z)$, $ z \in \mathbb{R}^{n+1} $. Since $ \tilde p $ is a homogeneous polynomial, also $ \hat p $ is homogeneous, thus we obtain
	\begin{equation}\label{eq:tmp}
	\tilde p \lr{z^{(1)} , z^{(2)}   , \ldots ,  z^{(d)} }
	=
	\frac{1}{d!}
	\sum_{\substack{T \subset [d] :\\ T \neq \emptyset}} 
	(-1)^{d-\lrv T}  \lrv T^d \, \hat p \lr{ \frac{\sum_{j \in T}  z^{(j)} }{\lrv T} },  
	\qquad z^{(1)}, \ldots ,  z^{(d)}  \in \mbb R^{n+1}.
	\end{equation}

	To show the bound on $ \tilde p $, fix $z^{(1)}, \ldots ,  z^{(d)}  \in \BCF^{n+1}  $;	 we can assume that $z^{(j)}_0=1$ for all $j\in [d]$. (If $z^{(j)}_0=-1$, we multiply all $z^{(j)}_i$ by $ -1 $ and $|\tilde p\lr{z^{(1)}, z^{(2)}, \ldots, z^{(d)}  }|$ stays unchanged.)
	For every $ T \subset [d] $, $ T \neq \emptyset $, denote $ (x^T_0, \ldots, x^T_n):= \frac{1}{\lrv T}\sum_{j \in T}  z^{(j)}   $; then
	\begin{itemize}
		\item $ x^T_0 = 1 $;
		\item $   x^T := (x^T_1, \ldots, x^T_n) \in [-1;1]^n  $.
	\end{itemize}
	
	Since $ x^T_0 = 1 $, from the first part of the proof we have $ \hat p(x^T_0, \ldots, x^T_n)  = p(x^T)$.
	Therefore, from \eqref{eq:tmp} we obtain
	\begin{equation}\label{eq:tmp1}
	\tilde p \lr{z^{(1)} , z^{(2)}   , \ldots ,  z^{(d)} }
	=
	\frac{1}{d!}
	\sum_{\substack{T \subset [d] :\\ T \neq \emptyset}} (-1)^{d-\lrv T}  \lrv T^d \, p \lr{ x^T }.
	\end{equation}
	
	Since $ p $ is multilinear, its maximum over $  [-1;1]^{n} $ coincides with its maximum over    $   \BCF^{n} $; i.e., $ \lrv {p(x)} \leq 1 $ for all $ x \in [-1;1]^n $, thus also $ \lrv{p \lr{ x^T }}  \leq 1 $ for all $ T  $.
	We conclude that
	the value of $\lrv {\tilde p \lr{z^{(1)} , z^{(2)}   , \ldots ,  z^{(d)} }} $ is at most
	\[ 
	C_d
	\leq 
	\frac{1}{d!}
	\sum_{
		\substack{
			T \subset [d]: \\
			T  \neq \emptyset
		}
	}
	\lrv{T}^{d}
	=
	\frac{1}{d!} \sum_{s=1}^{d} \binom{d}{s}s^d := B(d).
	\]

	Let us show that $ B(d)  = \Theta \lr{\frac{\alpha^d}{\sqrt d}} $, where 
		$\alpha = 1/W(\exp(-1))  \approx  3.5911$
		and $ W $ stands for the (primary branch of) Lambert W function.
	Let $ \beta = 1/\alpha = W(1/e) $. It is known \cite{Kot} that
	\[ 
	\sum_{s=1}^{d} \binom{d}{s}s^d
	\sim
	\frac{1}{\sqrt{  1 + \beta  }}
	\lr{\frac{d}{e \beta}  }^d   .
	\] 
	By Stirling's formula, 
	\[ 
	d! \sim \sqrt{2 \pi d} \lr{\frac{d}{e  }  }^d  .
	\]
	Thus
	\[ 
	B(d) \sim  
	\frac{1}{\sqrt{2\pi(  1 + \beta )d }}
	\lr{\frac{d}{e \beta}  }^d     \lr{\frac{d}{e  }  }^{-d }
	=
	\frac{\alpha^d}{\sqrt{2\pi(  1 + \beta )d }}	
	=
	\Theta  \lr{\frac{\alpha^d}{\sqrt d}}.
	\]
	In particular, when $ d=2 $, we have
	\[ 
	C_2 \leq B(2) = \frac{2 \cdot 1^2 + 1 \cdot 2^2}{2!}  = 3.
	 \]

\end{appendix}

\begin{thebibliography}{99}
 	
 	\bibitem{AA}
 	S. Aaronson, A. Ambainis.
 	Forrelation: A Problem that Optimally Separates Quantum from Classical Computing.
 	{\em Proceedings of STOC'2015}, pp. 307-316. Also arxiv:1411.5729.
 	
 	\bibitem{ABK}
    S. Aaronson, S. Ben-David, R. Kothari.
 	Separations in query complexity using cheat sheets.
 	arXiv:1511.01937.
 	
 	\bibitem{AS}
 	S. Aaronson, Y. Shi.
 	Quantum lower bounds for the collision and the element distinctness problems.
 	{\em Journal of the ACM}, 51(4): 595-605, 2004.
 	
 	\bibitem{AGT}
 	A. Acin, N Gisin, B Toner.
 	Grothendieck's constant and local models for noisy entangled quantum states
 	{\em Physical Review A}, 73 (6), 062105, 2006. Also quant-ph/0606138.
 	
 	\bibitem{A03}
	A. Ambainis. Polynomial degree vs. quantum query complexity.
 	{\em Journal of Computer and System Sciences},
 	72(2):220-238, 2006. Earlier versions at FOCS'03 and quant-ph/0305028.
 	
 	\bibitem{A04}
 	A. Ambainis. Quantum walk algorithm for element distinctness.
 	{\em SIAM Journal on Computing}, 37(1): 210-239, 2007.
 	Also FOCS'04 and quant-ph/0311001.
 	
 	\bibitem{AC+}
 	A. Ambainis, A. Childs, B. Reichardt, R. Spalek, S. Zhang.
 	Any AND-OR Formula of Size $N$ Can Be Evaluated in Time $N^{1/2+o(1)}$ on a Quantum Computer.
 	{\em SIAM Journal on Computing}, 39(6): 2513-2530, 2010.
 	Also FOCS'07.
 	
 	\bibitem{Beals}
 	R. Beals, H. Buhrman, R. Cleve, M. Mosca, and R. de Wolf. Quantum lower bounds by
 	polynomials. {\em Journal of the ACM}, 48(4):778-797, 2001.
 	Earlier versions at FOCS'98 and quant-ph/9802049.

\bibitem{BHMT}
G. Brassard, P. H\o yer, M. Mosca, A. Tapp.
Quantum amplitude amplification and estimation.
In {\em Quantum Computation and Quantum Information Science},
AMS Contemporary Mathematics Series, 305:53-74, 2002.
Also quant-ph/0005055.

\bibitem{Braverman}
M. Braverman, K. Makarychev, Y. Makarychev, A. Naor. 
The Grothendieck Constant is Strictly Smaller than Krivine's Bound. 
{\em Proceedings of FOCS'2011}, pp. 453-462.

 	\bibitem{BCWW}
 	H. Buhrman, R. Cleve, J. Watrous, R. de Wolf.
    Quantum fingerprinting. {\em Physical Review Letters}, 87(16): 167902, 2001.
	Also quant-ph/0102001.
	
\bibitem{BW}	
H. Buhrman, R. de Wolf. Complexity measures and decision tree complexity: a survey.
{\em Theoretical Computer Science}, 288:21-43, 2002.
	
	\bibitem{CHSH}
	J. Clauser, M. Horne, A. Shimony, R. Holt. 
    Proposed experiment to test local hidden-variable theories. 
    {\em Physical Review Letters}, 23(15):880, 1969.


\bibitem{DZ}
R. O'Donnell, Y. Zhao. Polynomial Bounds for Decoupling, with Applications. 
{\em Proceedings of CCC'2016}, pp. 24:1--24:18.  Also CoRR abs/1512.01603.


 	\bibitem{FGG}
 	E. Farhi, J. Goldstone, S. Gutmann,
 	A Quantum Algorithm for the Hamiltonian NAND Tree.
 	{\em Theory of Computing}, 4:169-190, 2008.
 	Also quant-ph/0702144.
 	
\bibitem{GKP}
R. Graham, D. Knuth, O. Patashnik. {\em Concrete Mathematics}, 2nd edition.
Addison-Wesley, 1994.

 	\bibitem{Grover}
 	L. K. Grover.
 	A fast quantum mechanical algorithm for database search.
 	{\em Proceedings of STOC'96}, pp. 212-219.
	Also quant-ph/9605043.

    \bibitem{Kot}    
 	V. Kotesovec. 
 	Interesting asymptotic formulas for binomial sums.
 	\url{http://members.chello.cz/kotesovec/math_articles/kotesovec_interesting_asymptotic_formulas.pdf}, June 2013.
 	
 	\bibitem{Kwapien}
 	S. Kwapien. 
 	Decoupling inequalities for polynomial chaos.
 	{\em The Annals of Probability}, 15:1062-1071, 1987.
 	
 	\bibitem{RSA:RSA20232}
 	N.~Linial, A.~Shraibman.
 	\newblock {Lower bounds in communication complexity based on factorization norms}.
 	\newblock {\em Random Structures and Algorithms}, 34(3):368--394, 2009.

 	\bibitem{MNR}
 	A. Montanaro, H. Nishimura, R. Raymond.  Unbounded error quantum query complexity.
 	{\em Theoretical Computer Science}, 412(35):4619-4628, 2011.
 	Also arXiv:0712.1446.
 	
    \bibitem{NS}
    N. Nisan, M. Szegedy. 	
    On the Degree of Boolean Functions as Real Polynomials.
    {\em Computational Complexity}, 4: 301-313, 1994.

\bibitem{PG}
V. de la Pena, E. Gine.
{\em Decoupling: from Dependence to Independence.}
Springer, 1999.

 	\bibitem{Pisier}
 	G.~Pisier.
 	\newblock {Grothendieck's theorem, past and present}.
 	\newblock {\em Bull. Am. Math. Soc., New Ser.}, 49(2):237--323, 2012.
 	
 	\bibitem{Reichardt}
 	B. Reichardt.
 	Span-program-based quantum algorithm for evaluating unbalanced formulas.
 	{\em Proceedings of TQC'2011}, pp. 73-103. Also arXiv:0907.1622.
 	
 	
 	\bibitem{Shor}
 	P. Shor. Algorithms for Quantum Computation: Discrete Logarithms and Factoring.
 	{\em SIAM Journal on Computing}, 26:1484-1509, 1997.
 	Also FOCS'94 and quant-ph/9508027.
 
	\bibitem{Thomas}
	E. Thomas. A polarization identity for multilinear maps. 
    {\em Indagationes Mathematicae}, 25: 468-474, 2014.
    Also arXiv:1309.1275.

 	\bibitem{deWolf}
 	R. de Wolf. Nondeterministic quantum query and quantum communication complexities.
 	{\em SIAM Journal on Computing}, 32(3):681-699, 2003.
 	Also arXiv:cs/0001014.
 	
 	
 \end{thebibliography}
\end{document}